\newcommand*{\addFileDependency}[1]{
  \typeout{(#1)}
  \@addtofilelist{#1}
  \IfFileExists{#1}{}{\typeout{No file #1.}}
}
\def\R{\mathbb{R}}
\DeclareMathOperator*{\argmin}{argmin}
\def\mc{\mathcal}
\def\mp{m_\oplus}
\def\hmp{\hat{m}_\oplus}
\def\son{\sum_{i = 1}^n}
\def\hmu{\hat{\mu}}
\def\hsigma{\hat{\sigma}}
\def\gp{g_\oplus}
\def\hgp{\hat{g}_\oplus}
\def\tgp{\tilde{g}_\oplus}
\def\Lp{\Lambda_\oplus}
\def\hLp{\hat{\Lambda}_\oplus}
\def\inv{^{-1}}
\def\d{\mathrm{d}}
\newcommand{\norm}[1]{\lVert #1 \rVert}
\theoremstyle{plain}
\newtheorem{theorem}{Theorem}
\newtheorem{lemma}{Lemma}
\newtheorem{cor}{Corollary}
\begin{document}


\begin{frontmatter}

\title{Fréchet single index models for object response regression}
\runtitle{Fréchet Single Index Regression}

\author{\fnms{Aritra} \snm{Ghosal}\ead[label=e1]{ghosal@pstat.ucsb.edu}}
\and
\author{\fnms{Wendy} \snm{Meiring}\ead[label=e2]{meiring@pstat.ucsb.edu}}
\address{Department of Statistics and Applied Probability \\
University of California Santa Barbara \\
\printead{e1,e2}}

\author{\fnms{Alexander} \snm{Petersen}\thanksref{t1} \corref{}\ead[label=e3]{petersen@stat.byu.edu}}
\address{Department of Statistics \\
Brigham Young University \\
\printead{e3}}

\thankstext{t1}{A. Petersen was supported by NSF grant DMS 2128589}

\runauthor{Ghosal, Meiring and Petersen}

\begin{abstract}

With the increasing availability of non-Euclidean data objects, statisticians are faced with the task of developing appropriate statistical methods for their analysis. For regression models in which the predictors lie in $\R^p$ and the response variables are situated in a metric space, conditional Fréchet means can be used to define the Fréchet regression function.  Global and local Fréchet methods have recently been developed for modeling and estimating this regression function as extensions of multiple and local linear regression, respectively.  This paper expands on these methodologies by proposing the Fréchet Single Index model, in which the Fréchet regression function is assumed to depend only on a scalar projection of the multivariate predictor. Estimation is performed by combining local Fréchet along with M-estimation to estimate both the coefficient vector and the underlying regression function, and these estimators are shown to be consistent.  The method is illustrated by simulations for response objects on the surface of the unit sphere and through an analysis of human mortality data in which lifetable data are represented by distributions of age-of-death, viewed as elements of the Wasserstein space of distributions.

\end{abstract}

\begin{keyword}[class=MSC]
\kwd[Primary ]{62J02}
\kwd[; secondary ]{62G08}
\end{keyword}

\begin{keyword}
\kwd{Fréchet regression}
\kwd{Single-Index Model}
\kwd{Random Objects}
\kwd{Local Smoothing}
\end{keyword}

\tableofcontents

\end{frontmatter}

\section{Introduction}
\label{sec:intro}

A challenges in modern statistics is to analyze complex data objects that often possess structural or geometric properties.  Often, such properties are essential to their character and interpretation, and must be respected in statistical analyses to maintain maximal utility in drawing scientific conclusions.  A basic ingredient for modeling these objects is the presence of a metric that quantifies the disparity between them, from which one can extend valuable statistical concepts such as measures of center and dispersion.  These ideas date back to the seminal work of \cite{frechet1948elements}, where the Fréchet mean and variance were defined for random elements of a metric space.  In recent years, such data have been termed random objects \citep{muller2016peter}, while the associated set of tools has also been referred to as object oriented data analysis \citep{marron2014overview,patr:15}. Relevant examples include covariance matrices \citep{yuan:12}, probability distributions \citep{petersen2021wasserstein,chen2021wasserstein}, and networks \citep{mull:18:5}, among many others. 

The demand for regression tools for modeling the dependence of random objects on vector-valued covariates has grown steadily in recent years.  Along the way, the scope of these tools has broadened significantly, beginning with relatively simple spaces such as a circle or sphere \citep{fish:95,fish:87,chan:89}, then on to smooth manifolds \citep{pell:06,shi:09,niet:11,hink:12,yuan:12,flet:13,corn:17} and, most recently, general metric spaces \citep{fara:14,petersen2019frechet}.  In the case of smooth Riemannian manifolds, the cited regression models and estimators include parametric, semiparametric, and nonparametric variants that provide valuable flexibility; methods for general metric spaces are comparatively less developed.  \cite{petersen2019frechet} recently introduced two techniques applicable to response objects in a generic metric space.  Termed global and local Fréchet regression, these tools generalize linear and local linear regression, respectively, from the scalar response setting using similar principles to the classical Fréchet mean.  

As local and global Fréchet regression are extensions of classical tools for scalar response variables, one may naturally look to other scalar response models for inspiration in developing methods to balance the strengths and weaknesses of these two methods, the former providing flexibility, and the latter stability.  The model proposed in this paper is based on the single index model for scalar responses, specifically the approach studied by \cite{ichimura1993semiparametric}.  For a random pair $(X,Y) \in \R^p \times \R,$ the scalar response single index model asserts that
\begin{equation}
    \label{eq:SIM}
    m(x) := E(Y|X = x) = g(\theta_0'x)
\end{equation}
for an unknown coefficient $\theta_0 \in \R^p$ and unknown smooth function $g$. The hybrid nature is embodied by the finite-dimensional parameter $\theta_0$ and a univariate function $g$ that resides in an infinite-dimensional function space.  This is in contrast with multiple linear regression, in which $g$ is assumed to be linear, and a fully nonparametric model in which $m$ is a smooth function with $p$-dimensional domain.  Interpretation is simplified compared to a fully nonparametric model since $\theta_0$ is a global parameter that modulates the effect of each predictor. At the same time, the model is more flexible than a linear one by allowing the effect of the index $\theta_0'x$ to be nonlinear.  A theoretical advantage that adds utility to the model is that various reasonable estimators, including that studied in \cite{ichimura1993semiparametric}, are able to estimate $\theta_0$ with a parametric rate, even yielding a limiting normal distribution in some cases.  

In this paper, the Fréchet single index model is proposed as a generalization of the standard single index model through the use of conditional Fréchet means.  Prior to its definition, Section~\ref{sec:frechetReg} provides the necessary background on Fréchet means, both marginal and conditional, as well as a description of the local Fréchet regression technique for estimating the latter.  The FSI model is formally defined in Section~\ref{sec:FSI}, where estimators of the coefficient vector and univariate object-valued regression function are also detailed.  Consistency of both the single index parameter and the overall regression estimator are established.  Simulations for spherical response data illustrate the sampling variability of these estimators in Section~\ref{sec:sims}, and a real data analysis involving the association of distributions of age-at-death for various countries with economic indicators is provided in Section~\ref{sec:mort}.  Code for both the simulation and real data example can be found on Github (\url{https://github.com/aghosal89/Frechet_SingleIndex}).

\section{Background on Fréchet Regression}
\label{sec:frechetReg}

Let $(\Omega, d)$ be a bounded metric space.  The response $Y\in \Omega$ is to be modeled conditionally on a $p$-dimensional covariate $X\in \mathbb{R}^p$. Assume $(X,Y)\sim F$, with $F$ being a joint distribution on $\mathbb{R}^p \times \Omega$ such that $\Sigma=\operatorname{Var}(X)$ exists with $\Sigma$ positive definite and $\mu=E(X)$. When $\Omega$ is a Euclidean space such as $\mathbb{R}^d$ or $L^2[0,1]$ as would be the typical case for multivariate or functional data, one can utilize the usual notions of expectation arising from Lebesgue integration to quantify the mean and variance of $Y$.  For arbitrary metric spaces $\Omega,$ the concepts of mean and variance of a random variable are replaced by the Fréchet mean and variance \citep{frechet1948elements}, respectively, defined as
\begin{equation}
\label{eq:fr_meanvar}
\omega_{\oplus}=\underset{\omega \in \Omega}{\operatorname{argmin}} \, E\left(d^{2}(Y, \omega)\right), \quad V_{\oplus}=E\left(d^{2}\left(Y, \omega_{\oplus}\right)\right).
\end{equation}
Existence and uniqueness of the Fréchet mean is by no means guaranteed for general metric spaces.  However, in special cases such as certain Riemannian manifolds \citep{afsa:11,pennec2018barycentric} or spaces with negative curvature \citep{bhat:03,bhat:05}, Fréchet means exist and are unique.  For the moment, we assume at least that a minimizer exists, with the consequence that $\omega_\oplus$ and $V_\oplus$ are not vacuous, and the latter is unique.  Extending these concepts to regression, define the Fréchet regression function $Y$ given $X=x \in \mathbb{R}^p$ as
\begin{equation}
    m_{\oplus}(x)=\underset{\omega \in \Omega}{\argmin}\, M_{\oplus}(\omega,x), \quad M_{\oplus}(\cdot,x)=E(d^2(Y,\cdot)|X=x).
    \label{eq:fr_regressionfunction}
\end{equation}

\subsection{Local Fréchet Regression}
\label{ss:LF_def}

Two different approaches were proposed by \cite{petersen2019frechet} to estimate the conditional Fréchet means $m_\oplus(x).$  First, a global model for $m_\oplus(x)$ was proposed in which $\mp(x)$ can be written as the minimizer of an alternative objective function motivated by multiple linear regression in the case $\Omega = \R.$  The result is that $\mp(x)$ can be viewed as a weighted Fréchet mean, where the weights depend on the joint distribution $F$ and the input $x.$  As a direct generalization of linear regression, global Fréchet regression similarly can be overly restrictive for random object responses.  Thus, in a second approach, \cite{petersen2019frechet} also demonstrated how to generalize local linear regression to estimate $m_\oplus(x)$ under less restrictive assumptions on the function $\mp$.  This approach, termed local Fréchet regression, will now be described.

The motivation stems from considering a scalar predictor $X \in \R$ and response $Y \in \Omega = \mathbb{R},$ so that the target $m_{\oplus}(x)=: m(x)$ in (\ref{eq:fr_regressionfunction}) is just the usual conditional expectation. Let $K$ be a probability density kernel, $h$ a bandwidth, and $K_h(\cdot) = h^{-1}K(\cdot/h)$, as used in local polynomial estimation.  Given a random sample $(X_i,Y_i),$ $i = 1,\ldots,n$ and a fixed predictor value $x$, \cite{petersen2019frechet} utilized the expression
$$
\hat{l}(x) = \frac{1}{n}\sum_{i = 1}^n \hat{s}_h(X_i, x)Y_i
$$
for the well-known local linear estimator \citep{fan1996local} of $\mp(x)$ in order to motivate the local Fréchet technique for general response object spaces $\Omega.$  Here, the empirical weight function $\hat{s}_h$, as derived from the local linear least squares criterion, is
\begin{equation}
    \label{eq:lin_local_weights}
    \hat{s}_h(z, x) = \hat{\varsigma}^{-2}K_h(z - x)\left[\hmu_2 - \hmu_1(z - x)\right],
\end{equation}
where
$$
\hmu_j = n\inv\son K_h(X_i - x)(X_i - x)^j, \quad \hat{\varsigma}^2 = \hmu_0\hmu_2 - \hmu_1^2,
$$
and thus satisfies $n\inv \sum_{i = 1}^n \hat{s}_h(X_i, x) = 1.$ Hence, $\hat{l}(x)$ is a weighted average of the observed responses or, equivalently, 
\begin{equation}
    \label{eq:lin_local}
    \hat{l}(x) = \argmin_{y \in \R} \son \hat{s}_h(X_i, x)(Y_i - y)^2.
\end{equation}

The local Fréchet regression estimator of $\mp(x)$ in \eqref{eq:fr_regressionfunction} for a general metric space $\Omega$ is obtained by replacing the squared difference $(Y - y)^2$ in \eqref{eq:lin_local} by its appropriate counterpart in metric spaces, the squared distance.  Given a random sample $(X_1,Y_1),(X_2,Y_2),\ldots,(X_n,Y_n)$ independently distributed according to $F$ and a fixed $x \in \mathbb{R}$, the local Fréchet estimator is
\begin{equation}
\hat{l}_\oplus(x)= \argmin_{\omega \in \Omega} \sum_{i=1}^{n} \hat{s}_h(X_i, x) d^{2}\left(Y_{i}, \omega\right)
\label{eq:fr_local_empirical}
\end{equation}
where the weights are again given by \eqref{eq:lin_local_weights}. As pointed out by one reviewer, the criterion minimized in the right-hand side of \eqref{eq:fr_local_empirical} is, for each $x$ and $\omega$, a local linear estimator of the conditional expected value represented by $M_\oplus(\omega,x)$ in \eqref{eq:fr_regressionfunction}.  Thus, the local Fréchet regression approach is equivalent to pointwise estimation of $M_\oplus$ by local linear regression, followed by its minimization over $\Omega.$

\section{The Fréchet Single Index Model}
\label{sec:FSI}

While extension of the local Fréchet estimator to accommodate multivariate predictors $x \in \R^p,$ $p > 1,$ is mathematically straightforward, its performance will deteriorate quickly with increasing $p$ due to the curse of dimensionality.  Thus, for even moderate $p$, the global Fréchet model may be preferable despite its bias due to increased stability in estimation.  Unsurprisingly, one can attempt to balance the strengths, and mitigate the weaknesses, of these two Fréchet approaches in the same spirit that semiparametric techniques do so for parametric and nonparametric estimators in classical models.  Specifically, the proposed Fréchet Single Index model assumes that the Fréchet regression function only depends on $x$ through an index $\theta_0'x \in \R,$ for some $\theta_0 \in \R^p.$

\subsection{Model Definition}
\label{ss:FSI_model}

The coefficient $\theta_0 \in \mathbb{R}^p$ constitutes the primary target of interest in this new model, as it lends interpretability by specifying the contribution of each predictor.  For identifiability purposes \citep{lin2007identifiability}, define the parameter space $$\Theta_p=\{\theta\in \mathbb{R}^p: \text{the first non-zero element of $\theta$ is positive, and } \|\theta\|_E=1 \},$$ where $\|\cdot\|_E$ is the Euclidean norm. Hence, $\theta$ belongs to the surface of the unit sphere in $p$ dimensions.  By this convention, $\Theta_1 = \{1\}$, for which the required theoretical work is well-developed as local Fréchet regression. Therefore we focus on analyzing $p \ge 2$. 

A comprehensive discussion of a large class of single index models and their applications can be found in \cite{ichimura1993semiparametric} for response data $Y \in \mathbb{R}$, where the index parameter $\theta_0$ was estimated using the Semiparametric Least Squares (SLS) method. The procedure that will be described for estimating the coefficient in the proposed model is inspired by this intuitive technique, and leverages local Fréchet regression and standard distance-based least squares.

To formally define the new model, let $F_X$ denote the marginal distribution of $X$, with support $\mc{X} \subset \R^p$. For any $\theta \in \Theta_p,$ define the Fréchet regression function conditional on the projected variable $\theta'X$ as 
\begin{equation}
g_{\oplus}(u, \theta)= \argmin_{\omega \in \Omega} \Lambda_{\oplus}(\omega,u,\theta),\quad \Lambda_\oplus( \cdot, u,\theta) = E(d^2(Y, \cdot)|\theta'X = u),  
\label{eq:gplus}
\end{equation}
where $u \in \mc{U}_\theta := \{\theta'x: \, x \in \mc{X}\}$ and a minimizer is assumed to exist.  Thus, the Fréchet single index (FSI) model for $m_\oplus(x)$ in \eqref{eq:fr_regressionfunction} is 
\begin{equation}
    \label{eq:fsi_model}
    m_\oplus(x) = g_\oplus(\theta_0'x,\theta_0),
\end{equation}
consisting of an unknown parameter $\theta_0 \in \Theta_p$ and unknown smooth function $g_\oplus(\cdot,\theta_0)$ on $\mc{U}_{\theta_0}$. 

Given existence of the minimizers in \eqref{eq:gplus}, identifiability of the parameter $\theta_0$ is equivalent to the statement
$$
P(\gp(\theta'X, \theta) \neq \gp(\theta_0'X, \theta_0)) > 0,
$$
from which it can be deduced that
\begin{equation}
    \label{eq:W}
    W(\theta) = E\left(d^2(Y, g_\oplus(\theta'X,\theta))\right),
\end{equation}
the natural generalization of the least-squares criterion for metric spaces, is uniquely minimized at $\theta_0.$  Thus, the above criterion will be used to construct an M-estimator for $\theta_0.$  In a recent preprint, \cite{bhattacharjee2021single} independently investigated model \eqref{eq:fsi_model}, though using a slightly different strategy to estimate $W(\theta)$ than that employed in this paper. A comparison of the proposed estimator and that of \cite{bhattacharjee2021single} is provided below in Section~\ref{ss:Comp}.

\subsection{Estimation}
\label{ss:FSI_est}

Suppose a random sample $(X_i, Y_i),$ $i = 1,\ldots, n$, distributed according to $F$ is available. As the true parameter $\theta_0$ is unknown, we proceed to estimate the target in \eqref{eq:fsi_model} in two steps.  First, $\gp(\theta'x,\theta)$ is estimated for fixed $\theta$ using local Fréchet regression, followed by optimization over $\theta$.  Let $h > 0$ be a given bandwidth and $K$ a univariate probability density kernel, as before. The estimates in this section depend on $h$, although we suppress this dependence for simplicity in several formulae. 

For a fixed $\theta \in \Theta_p,$ repurposing \eqref{eq:lin_local_weights} and \eqref{eq:fr_local_empirical} for use with the predictors $\theta'X_i,$ we obtain the estimate
\begin{equation}
    \label{eq:gplus_est}
    \hgp(\theta'x,\theta)= \argmin_{\omega \in \Omega} \hLp (\omega,\theta'x,\theta), \quad \hLp(\omega,\theta'x,\theta)= \frac{1}{n}\sum_{i=1}^{n} \hat{r}_h(X_i, x, \theta) d^{2}\left(Y_{i}, \omega\right).
\end{equation}
Here, the weight function $\hat{r}_h: \R^p \times \R^p \times \Theta_p \rightarrow \R$ is
\begin{equation}
    \label{eq:fsi_weights}
    \hat{r}_h(z, x, \theta) = \hsigma_{\theta}^{-2}(x)K_h(\theta'(z - x))\left[\hmu_{2,\theta}(x) - \hmu_{1,\theta}(x)(\theta'(z - x))\right],
\end{equation}
where, for $j = 0, 1, 2,$ 
\begin{equation}
    \label{eq:muj_est}
    \hmu_{j,\theta}(x) = n\inv\son K_h(\theta'(X_i - x))(\theta'(X_i - x))^j
\end{equation} 
and $\hsigma_{\theta}^2(x) = \hmu_{0,\theta}(x)\hmu_{2,\theta}(x) - \hmu_{1,\theta}(x)^2.$

Utilizing this result, we construct a criterion for estimating $\theta_0$ by defining an empirical version of \eqref{eq:W}. Replacing the expectation with the empirical distribution, and replacing $\gp(\theta'X_i, \theta)$ with the fitted value $\hat{Y}_i(\theta,h) = \hgp(\theta'X_i, \theta)$ yields
\begin{equation}
    \label{eq:What}
    W_n(\theta)=\frac{1}{n}\sum_{i=1}^{n}d^2(Y_i, \hat{Y}_{i}(\theta,h)).
\end{equation} 
The coefficient vector $\theta_0$ is then estimated by
\begin{equation}
    \label{eq:theta_est}
    \hat{\theta} = \hat{\theta}(h) = \argmin_{\theta \in \Theta_p} W_n(\theta).
\end{equation}

As is typically the case in this type of semi-parametric estimation approach, the bandwidth $h$ cannot decay too quickly if one is to obtain a consistent estimator of $\theta_0$.  Indeed, Theorem~\ref{thm:unif_cons} below restricts the decay of $h$ in a way that depends on the dimension $p$ as well as $n$. Nevertheless, in constructing the final estimator $\hmp(x)$ of the regression function $\mp(x),$ a different smoothing bandwidth may be used, potentially improving the overall rate of convergence.  Specifically, denote by $\tilde{g}(\theta'x,\theta)$ the estimator in \eqref{eq:gplus_est} for any $\theta$ and $x$ using a bandwidth $\tilde{h} > 0.$  Then the final regression estimator is 
\begin{equation}
    \label{eq:FSI_final}
    \hmp(x) = \tgp(\hat{\theta}'x,\hat{\theta}).
\end{equation}

\subsection{Theoretical Properties}
\label{ss:theory}

For semiparametric models such as the proposed FSI model, the primary target of interest is the parametric component, in this case $\theta_0.$  Once the properties of the estimate $\hat{\theta}$ are known, their effects on the ensuing estimate $\hmp(x)$ in \eqref{eq:FSI_final} can be determined. A necessary preliminary result is the uniform consistency of the estimates $\hgp(\theta'x,\theta)$ in \eqref{eq:gplus_est} over $x$ and $\theta,$ in analogy to Theorem 5.1 of \cite{ichimura1993semiparametric} in the case of a scalar response.  Uniform consistency of local Fréchet regression for a scalar predictor was first proved by \cite{chen2022uniform}, combining pointwise results of \cite{petersen2019frechet} with uniform results on kernel estimation \citep{silverman1978weak,mack1982weak}.  For the proposed estimators in the FSI model, this result of \cite{chen2022uniform} implies consistency of $\hgp(\theta'x,\theta)$ that is uniform in $x$ for a fixed $\theta$, and is thus insufficient for our purposes.  

The key results in \cite{silverman1978weak,mack1982weak} utilize Brownian bridge approximations to the empirical distribution of a scalar or bivariate sample \citep{komlos1975approximation,tusnady1977remark}.  The quantities $h^{-j}\hmu_{j,\theta}(x)$ are, in fact, kernel estimators of the density of $\theta'X$ at $\theta'x,$ so the appropriate generalization of the Brownian bridge technique involves such an approximation that is uniform in $x$ and $\theta.$  We obtain such an approximation by utilizing a comparatively lesser known multivariate Brownian bridge approximation of \cite{csorgHo1975new} for the predictor sample $X_1,\ldots,X_n$.  The value of this particular approximation was demonstrated in \cite{rosenblatt1976maximal} for $p$-dimensional kernel density estimates.  However, to the knowledge of the authors, the corresponding result of Lemma~\ref{lma:BB_approx}, stated and proved in Section~\ref{ss:aux}, is a novel result of independent interest.  It establishes a strong Brownian bridge approximation for kernel density estimates of projected variables $\theta'X$ that is uniform in both the projection direction and the density argument, and is crucial for establishing Theorem~\ref{thm:unif_cons} below.  As we are performing regression, Lemma~\ref{lma:BB_approx2} in Section~\ref{ss:aux} gives a similar approximation in the spirit of Proposition~4 of \cite{mack1982weak} for the scalar response case.  We require the following conditions.
\begin{itemize}
    \item[(K)] The kernel $K$ is a probability density function  with $K(w) = K(-w)$, is uniformly continuous, and is of bounded variation.  With $K^{(j)}(w) = K(w)w^j$, $j = 0,1,2,$ the definite integrals $\int_\R w^4K(w)\mathrm{d}w,$ $\int_\R w^6 K^2(w)\mathrm{d}w,$ and $\int_\R |w\log|w||^{1/2}\mathrm{d}K^{(j)}(w)$ are all finite.
    \item[(F)] Let $F_X$, $F_{\theta'X}$, and $F_{\theta'X|Y}$ denote the distributions of $X,$ $\theta'X,$ and $\theta'X|Y$, respectively. 
    \begin{enumerate}[i)]
        \item The support $\mc{X}$ is a bounded set, and the Rosenblatt transformation $T:\mc{X} \rightarrow \mathbb [0,1]^p$ defined in \cite{rosenblatt1952remarks} as $T_1(x) = F_{X_1}(x_1),$ $T_j(x) = F_{X_j|X_1,\ldots,X_{j-1}}(x_j|x_1,\ldots,x_{j-1})$ for $j =2,\ldots,p,$ is Lipschitz continuous.  The space $\Omega$ is also bounded with respect to the metric $d$.
        \item For each $\theta,$ the support of $F_{\theta'X}$ is a compact interval $\mathcal{U}_\theta,$ and $F_{\theta'X}$ admits a density $f_{\theta'X}$ that is twice differentiable on the interior of $\mathcal{U}_\theta$ and satisfies 
        $$
        \inf_{\theta \in \Theta_p} \inf_{u \in \mathcal{U}_\theta} f_{\theta'X}(u) > 0, \quad \sup_{\theta \in \Theta_p} \sup_{u \in \mathcal{U}_\theta^\mathrm{o}} |f_{\theta'X}''(u)| < \infty.
        $$
        \item For any $y \in \Omega$ and $\theta \in \Theta_p,$ the conditional density $f_{\theta'X|Y}(u|y) = (\partial/\partial u) F_{\theta'X|Y}(u|y)$ is twice differentiable with respect to $u$ on $\mc{U}_\theta^\mathrm{o}$ and satisfies $$\sup_{y \in \Omega} \sup_{\theta \in \Theta_p} \sup_{u \in \mc{U}_\theta^{\mathrm{o}}} \left| \frac{\partial^2}{\partial u^2} f_{\theta'X|Y}(u|y)\right| < \infty.$$  Additionally, for any open set $V \subset \Omega$ and $\theta \in \Theta_p,$ the function \mbox{$P(Y \in V | \theta'X = \cdot)$} is continuous on $\mc{U}_\theta^\mathrm{o}.$
        \item For each fixed $\omega \in \Omega,$ with $R = d^2(Y,\omega)$, the vector $(X, R)$ has density $f_{X,R}(x,r)$ such that $\sup_x \int_\mathbb{R} r^2 f_{X,R}(x,r)\d r < \infty$.  Furthermore, the Rosenblatt transformation $T^+$ of $(X,R)$ is Lipschitz continuous.
    \end{enumerate}
    \item[(M)] The minimizers in \eqref{eq:gplus} are unique and, for any $\epsilon > 0,$ there is an $\eta = \eta(\epsilon) > 0$ such that
    $$
    \inf_{\theta \in \Theta_p} \inf_x \inf_{d(\omega, \gp(\theta'x,\theta)) > \epsilon} \{\Lp(\omega, \theta'x, \theta) - \Lp(\gp(\theta'x,\theta), \theta'x,\theta)\} > \eta.
    $$
    Finally, all minimizers \eqref{eq:gplus_est} exist with probability approaching 1, though these need not be unique.
\end{itemize}
Assumption (K) is common for smoothing estimators and strengthens the corresponding assumption of \cite{petersen2019frechet} for local Fréchet regression in order to provide uniform consistency.  Assumption (F) lists distributional assumptions on $(X,Y)$.  Parts i) and iv) are essential in order to leverage the Brownian bridge approximations of Lemmas~\ref{lma:BB_approx} and \ref{lma:BB_approx2}, respectively; parts ii) and iii) control the behavior of the estimated weight functions $\hat{r}_h$ in \eqref{eq:fsi_weights}, and imply the consistency of the empirical criteria in \eqref{eq:gplus_est}.  

Assumption (M) is a generic condition for M-estimators like those used here for the FSI model, though sufficient conditions for specific spaces $\Omega$ and distributions $F$ need to be derived on a case-by-case basis.  Compared to the corresponding assumptions employed by \cite{chen2022uniform} to establish uniform consistency of local Fréchet estimates, (M) is stronger in its uniformity over $\theta,$ but weaker as it only makes the separation assumption, given in the first display of (M), for the population criterion $\Lp$ and not the empirical criteria $\hLp.$  The verification of assumption (M) can be challenging and, in general, depends on properties of the metric space as well as the probability measure.  However, in some cases, uniqueness of both population and sample Fréchet means can be established, along with the separation assumption displayed in (M).  As a primary example, distributions on non-positively curved spaces, also known as Hadamard spaces, are known to possess unique Fréchet means \citep{sturm2003probability}.  These include Hilbert spaces of finite or infinite dimension, as well as convex subsets of these, as is the case for the space of Wasserstein distributions on the real line utilized below in the data example of Section~\ref{sec:mort}. Another example is the space of phylogenetic trees \citep{billera2001geometry}.  However, many data examples lie in spaces with positive curvature, such as spherical data that are illustrated in the simulations of Section~\ref{sec:sims}. While not a Hadamard space, the sphere is a proper Alexandrov space, for which sufficient conditions for uniqueness of Fréchet means, as well as the separation property stated in (M), have been investigated \citep{ohta2012barycenters}.

Observe also that (M) only requires uniqueness of the population Fréchet means, whereas only existence is required for the minimizers in \eqref{eq:gplus_est}.  Indeed, if closed balls in $\Omega$ are compact, the existence of minimizers $\hgp(u,\theta)$ follows by a continuity argument, so that the last statement of (M) ceases to be an assumption in this case.  The same is true if $\Omega$ is a closed, convex subset of a Hilbert space, with $d$ the Hilbertian metric.  In case multiple minimizers in \eqref{eq:gplus_est} exist, the consistency result below will hold for any such (sequence of) minimizers.  The proof of this result, given in Section~\ref{ss:main_proofs}, is considerably simplified compared to the uniform consistency arguments in both \cite{petersen2019frechet} and \cite{chen2022uniform} for global and local Fréchet regression, respectively.

\begin{theorem}
    \label{thm:unif_cons}
    Suppose assumptions (K), (F), and (M) are satisfied, and that $h \rightarrow 0$ as $n \rightarrow \infty$ such that $-\log(h)/(nh)$ and $[\log(n)]^3/[h^2n^{(p + 3)/(p+2)}]$ converge to 0 as $n \rightarrow \infty.$  Let $\hgp(\theta'x,\theta)$ denote any minimizer of \eqref{eq:gplus_est} when such a minimizer exists.  Then $$\sup_{\theta \in \Theta_p}\sup_{x} d(\hgp(\theta'x,\theta), \gp(\theta'x,\theta)) = o_P(1).$$
\end{theorem}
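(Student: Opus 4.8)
The plan is to reduce the claim to a single uniform convergence statement for the empirical criterion, and then to establish that convergence using the Brownian bridge approximations of Lemmas~\ref{lma:BB_approx} and~\ref{lma:BB_approx2}. For the reduction I would argue as in the standard theory of M-estimation, exploiting the separation condition in (M). Fix $\epsilon > 0$ and let $\eta = \eta(\epsilon)$ be as in (M). Abbreviating $\hat\omega = \hgp(\theta'x,\theta)$ and $\omega^* = \gp(\theta'x,\theta)$, suppose $d(\hat\omega, \omega^*) > \epsilon$ for some $\theta,x$. Since $\hat\omega$ minimizes $\hLp(\cdot,\theta'x,\theta)$ we have $\hLp(\hat\omega,\theta'x,\theta) \le \hLp(\omega^*,\theta'x,\theta)$, and adding and subtracting $\hLp$ at $\hat\omega$ and $\omega^*$ gives
$$\eta < \Lp(\hat\omega, \theta'x, \theta) - \Lp(\omega^*, \theta'x, \theta) \le 2\sup_{\omega \in \Omega}\left|\hLp(\omega, \theta'x, \theta) - \Lp(\omega, \theta'x, \theta)\right|.$$
As $\eta$ depends only on $\epsilon$ and not on $(\theta,x)$ by the uniformity built into (M), the event $\{\sup_\theta\sup_x d(\hat\omega,\omega^*) > \epsilon\}$ is contained in $\{\sup_{\theta,x,\omega}|\hLp - \Lp| \ge \eta/2\}$, so it suffices to prove
$$\sup_{\theta \in \Theta_p}\sup_x \sup_{\omega \in \Omega}\left|\hLp(\omega, \theta'x, \theta) - \Lp(\omega, \theta'x, \theta)\right| = o_P(1).$$

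To analyze this quantity I would express the local-linear criterion as an explicit combination of kernel-weighted averages. With $\hmu_{j,\theta}(x)$ as in \eqref{eq:muj_est} and the response-weighted analogues $\hat\nu_{j,\theta}(\omega,x) = n\inv\son K_h(\theta'(X_i-x))(\theta'(X_i-x))^j d^2(Y_i,\omega)$ for $j=0,1$, substituting \eqref{eq:fsi_weights} into \eqref{eq:gplus_est} yields
$$\hLp(\omega, \theta'x, \theta) = \hsigma_\theta^{-2}(x)\left[\hmu_{2,\theta}(x)\hat\nu_{0,\theta}(\omega, x) - \hmu_{1,\theta}(x)\hat\nu_{1,\theta}(\omega, x)\right].$$
Both families are kernel estimators: writing $\kappa_j = \int_\R w^j K(w)\,\d w$ (so $\kappa_0 = 1$, $\kappa_1 = 0$), the quantity $h^{-j}\hmu_{j,\theta}(x)$ targets $\kappa_j f_{\theta'X}(\theta'x)$ and $h^{-j}\hat\nu_{j,\theta}(\omega,x)$ targets $\kappa_j \Lp(\omega,\theta'x,\theta) f_{\theta'X}(\theta'x)$. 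By a Taylor expansion using the second-derivative bounds on $f_{\theta'X}$ and $f_{\theta'X|Y}$ from (F)(ii)--(iii), the boundedness of $d^2$ from (F)(i), and the moment conditions on $K$ in (K), the expectations $E\hmu_{j,\theta}(x)$ and $E\hat\nu_{j,\theta}(\omega,x)$ differ from these targets by $O(h^2)$ uniformly in $\omega,\theta,x$. Since $f_{\theta'X}$ is bounded below by (F)(ii) and $\kappa_2 > 0$, the denominator obeys $h^{-2}\hsigma_\theta^2(x) \to \kappa_2 f_{\theta'X}(\theta'x)^2$, bounded away from zero, so the ratio above is stable.

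The crux is the uniform stochastic fluctuation, namely controlling $\sup_{\theta,x}|\hmu_{j,\theta}(x) - E\hmu_{j,\theta}(x)|$ and $\sup_{\omega,\theta,x}|\hat\nu_{j,\theta}(\omega,x) - E\hat\nu_{j,\theta}(\omega,x)|$ at a rate $o(h)$, i.e. $o(1)$ for the $h^{-j}$-normalized quantities. Pointwise these are routine kernel fluctuations; the difficulty is uniformity over the projection direction $\theta \in \Theta_p$, which a naive union bound cannot reach since $\Theta_p$ is a continuum and the same data are reused for every $\theta$. This is exactly what the two Brownian bridge approximations are for. After the Rosenblatt transformation of (F)(i) reduces $X_1,\ldots,X_n$ to a uniform sample on $[0,1]^p$, Lemma~\ref{lma:BB_approx} approximates the empirical process of the projected predictors by a Gaussian process uniformly in both $\theta$ and the argument $u = \theta'x$; writing $\hmu_{j,\theta}(x)$ as a functional of this process transfers the fluctuation bound uniformly over $\theta$. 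Lemma~\ref{lma:BB_approx2} supplies the analogous approximation for $\hat\nu_{j,\theta}$, with the extra supremum over $\omega$ absorbed via the boundedness of $d$ in (F)(i) and the integrability condition (F)(iv) on $(X,d^2(Y,\omega))$. The dimension-dependent rate in the multivariate approximation of \cite{csorgHo1975new} is precisely what dictates the hypotheses: $-\log(h)/(nh) \to 0$ keeps the kernel variance term negligible uniformly, while $[\log n]^3/[h^2 n^{(p+3)/(p+2)}] \to 0$ dominates the approximation error, whose $p$-dependence worsens as the exponent $(p+3)/(p+2)$ decreases toward $1$.

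Assembling the pieces, I would combine the $O(h^2)$ uniform bias with the $o_P(1)$ uniform stochastic bounds and the uniform lower bound on $h^{-2}\hsigma_\theta^2(x)$ to conclude that, after scaling by $h^{-2}$, the numerator and denominator of $\hLp$ converge uniformly to $\kappa_2 \Lp(\omega,\theta'x,\theta)f_{\theta'X}(\theta'x)^2$ and $\kappa_2 f_{\theta'X}(\theta'x)^2$, so their ratio converges uniformly to $\Lp(\omega,\theta'x,\theta)$; this is the required uniform convergence, and the theorem follows from the reduction in the first paragraph. I expect the main obstacle to be the uniformity over $\theta$ in the stochastic term --- correctly invoking the projection-uniform approximation of Lemma~\ref{lma:BB_approx} and tracking how its dimension-dependent error governs the admissible decay of $h$.
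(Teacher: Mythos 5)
Your first-paragraph reduction via the separation constant $\eta(\epsilon)$ in (M) is precisely the paper's argument (the paper phrases it contrapositively with threshold $\eta/2$), and your rewriting of the criterion as $\hLp(\omega,\theta'x,\theta) = \hsigma_\theta^{-2}(x)\left[\hmu_{2,\theta}(x)\hat\nu_{0,\theta}(\omega,x) - \hmu_{1,\theta}(x)\hat\nu_{1,\theta}(\omega,x)\right]$ is a legitimate reorganization of the paper's three-term decomposition (bias via Lemma~\ref{lma:bias}, weight-estimation error via Lemma~\ref{lma:unifKDE}, fluctuation via Lemmas~\ref{lma:BB_approx2} and \ref{lma:unifKernReg}); your bias analysis, your treatment of the denominator, and your use of Lemma~\ref{lma:BB_approx} for the $\hmu_{j,\theta}$ fluctuations all track the paper's lemmas correctly.

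The genuine gap is the supremum over $\omega$ in the stochastic term $\sup_{\omega,\theta,x}\left|\hat\nu_{j,\theta}(\omega,x) - E\,\hat\nu_{j,\theta}(\omega,x)\right|$. Lemma~\ref{lma:BB_approx2} is stated and proved for a single \emph{fixed} $\omega$: the Rosenblatt transformation $T^+$ and the Brownian bridge $B_n^+$ are constructed from the joint law of $(X, d^2(Y,\omega))$, hence both depend on $\omega$, and the lemma's $O_P$ and almost-sure bounds carry $\omega$-dependent constants and exceptional sets. Declaring this supremum ``absorbed via the boundedness of $d$ and (F)(iv)'' is not an argument: $\Omega$ is a continuum, so the very objection you correctly raise against a naive union bound over $\theta$ applies verbatim to $\omega$, and nothing in Lemma~\ref{lma:BB_approx2} is uniform in $\omega$. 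The paper closes exactly this hole with an additional device that your proposal lacks: setting $\psi_n(\omega) = \sup_{x,\theta}\left|n^{-1}\son r_h(X_i,x,\theta)d^2(Y_i,\omega) - \tLp(\omega,\theta'x,\theta)\right|$, it obtains $\psi_n(\omega) = o_P(1)$ for each fixed $\omega$ (Lemma~\ref{lma:unifKernReg}), proves the uniform Lipschitz-type bound $|\psi_n(\omega_1) - \psi_n(\omega_2)| = O_P(d(\omega_1,\omega_2))$ using Lemma~\ref{lma:unifKDE} and the boundedness of $\Omega$, and then invokes Theorem~1.5.4 of \cite{well:96} (asymptotic equicontinuity plus pointwise convergence) to conclude $\sup_{\omega\in\Omega}\psi_n(\omega) = o_P(1)$. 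Your proof needs this equicontinuity/chaining step, or an equivalent covering argument over $\Omega$: boundedness of the metric supplies the Lipschitz estimate but does not by itself convert fixed-$\omega$ convergence into uniform-in-$\omega$ convergence, and without that step the uniform convergence of the criterion --- and hence the theorem --- is not established.
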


Theorem~\ref{thm:unif_cons} imposes two requirements on the $h$ besides the usual condition $h \rightarrow 0$ that ensures the smoothing bias goes to zero.  The first is that $-\log(h)/(nh) \rightarrow 0,$ and arises from the continuity modulus of a Brownian bridge in dimension $p + 1$.  This is the same rate that arises in one-dimensional smoothing, as the modulus of continuity is insensitive to the underlying dimension.  On the other hand, the approximation error leads to the condition $[\log(n)]^3/[h^2n^{(p+3)/(p+2)}] \rightarrow 0$ that does indeed depend on the dimension of the predictor variable.  Although such a condition is not necessary to derive consistency for scalar responses, i.e. $\Omega = \R,$ the lack of a closed form expression for the minimizers in \eqref{eq:gplus_est} prohibits arguments available in this special case from being used for general object response spaces.  We also remark that, in the case $p = 2,$ this dimension-dependent requirement on the bandwidth can be weakened using the specialized approximation result of \cite{tusnady1977remark} for bivariate distributions rather than the more general result of \cite{csorgHo1975new}.   

Having established uniform consistency of the local Fréchet regression estimates, one can easily demonstrate that the coefficient estimate is consistent, as well as the overall regression estimator $\hat{m}_\oplus(x)$ in \eqref{eq:FSI_final}. The proofs are given in Section~\ref{ss:main_proofs}.
\begin{cor}
    \label{cor:theta_cons}
    Suppose the assumptions of Theorem~\ref{thm:unif_cons} hold and that $\theta_0$ is identifiable. Then $\hat{\theta}$ converges to $\theta_0$ in probability.
\end{cor}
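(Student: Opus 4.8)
The plan is to treat $\hat\theta$ as a standard M-estimator and invoke the classical argmin consistency argument, which requires two ingredients: (i) uniform convergence of the empirical criterion $W_n$ in \eqref{eq:What} to the population criterion $W$ in \eqref{eq:W} over the parameter space $\Theta_p$, and (ii) a well-separated unique minimizer of $W$. Ingredient (ii) is partly supplied by the hypothesis: identifiability of $\theta_0$ is equivalent, as noted following \eqref{eq:fsi_model}, to $W$ being uniquely minimized at $\theta_0$. Since $\Theta_p$ is a closed subset of the unit sphere and hence compact, it then suffices to establish that $W$ is continuous on $\Theta_p$, for then the unique minimizer is automatically well-separated: for every $\epsilon > 0$ there is $\delta > 0$ with $\inf\{W(\theta) - W(\theta_0): \norm{\theta - \theta_0}_E \ge \epsilon\} \ge \delta$.

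For ingredient (i), I would route the deviation through the oracle criterion $\tilde W_n(\theta) = n^{-1}\son d^2(Y_i, \gp(\theta'X_i,\theta))$, in which the fitted values $\hat Y_i(\theta,h) = \hgp(\theta'X_i,\theta)$ are replaced by the true regression values, so that $\sup_{\theta} |W_n(\theta) - W(\theta)| \le \sup_\theta |W_n(\theta) - \tilde W_n(\theta)| + \sup_\theta |\tilde W_n(\theta) - W(\theta)|$. The first term is exactly where Theorem~\ref{thm:unif_cons} does its work: since $\Omega$ is bounded by part i) of assumption (F), the reverse triangle inequality gives $|d^2(y,\omega_1) - d^2(y,\omega_2)| \le 2\,\mathrm{diam}(\Omega)\, d(\omega_1,\omega_2)$, so that
$$\sup_\theta |W_n(\theta) - \tilde W_n(\theta)| \le 2\,\mathrm{diam}(\Omega)\sup_{\theta \in \Theta_p}\sup_x d(\hgp(\theta'x,\theta), \gp(\theta'x,\theta)) = o_P(1).$$
The second term is a uniform law of large numbers for the uniformly bounded function class $\{(x,y) \mapsto d^2(y,\gp(\theta'x,\theta)): \theta \in \Theta_p\}$, indexed by the compact set $\Theta_p$.

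Both the uniform law of large numbers and the continuity of $W$ required above reduce to a single structural fact: that $\theta \mapsto \gp(\theta'x,\theta)$ is continuous. I would establish this by an argmin-stability (maximum theorem) argument applied to \eqref{eq:gplus}: the population objective $\Lp(\omega, u,\theta)$ is a conditional mean of the bounded map $d^2(Y,\omega)$, jointly continuous in $(u,\theta)$ by the regularity of the conditional laws in assumption (F), while the uniform separation in assumption (M) guarantees that the minimizer is single-valued and stable. Joint continuity of $\gp$, combined with boundedness of $d^2$ and compactness of $\Theta_p$, then delivers the required uniform law of large numbers, and dominated convergence yields continuity of $W$. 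With (i) and (ii) in hand, the conclusion is routine: $W_n(\hat\theta) \le W_n(\theta_0)$ gives $W(\hat\theta) - W(\theta_0) \le 2\sup_\theta|W_n(\theta) - W(\theta)| = o_P(1)$, and the well-separation bound forces $\norm{\hat\theta - \theta_0}_E = o_P(1)$.

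The main obstacle I anticipate is precisely the continuity of $\theta \mapsto \gp(\theta'x,\theta)$, i.e.\ the stability of the conditional Fréchet mean as the projection direction varies. Unlike continuity in the argument $u$ for a fixed direction, which is governed by the smoothness hypotheses already present in part iii) of (F), varying $\theta$ changes the conditioning event $\{\theta'X = u\}$ itself, so joint continuity of $\Lp(\omega,\cdot,\cdot)$ in $(u,\theta)$ must be argued carefully before the maximum-theorem machinery can be applied. The uniform separation built into (M) is what makes the transfer from the objective $\Lp$ to its minimizer $\gp$ possible, and hence underpins both the uniform law of large numbers and the continuity of $W$.
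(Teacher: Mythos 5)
Your proposal is correct and follows essentially the same route as the paper's proof: the same decomposition through the oracle criterion $\tilde{W}_n(\theta) = n^{-1}\sum_{i=1}^n d^2\bigl(Y_i, g_\oplus(\theta'X_i,\theta)\bigr)$, with the fitted-value term controlled by Theorem~\ref{thm:unif_cons} together with boundedness of $\Omega$ and the reverse triangle inequality, the oracle term handled by a uniform law of large numbers resting on continuity of $g_\oplus$ extracted from (M), and the conclusion delivered by the standard argmin-consistency argument. The only discrepancy is your claim that $\Theta_p$ is closed and hence compact, which is false as stated (for $p=2$ it is the half-open half-circle $\{(\cos\eta,\sin\eta):\eta\in(-\pi/2,\pi/2]\}$, missing its limit point $(0,-1)$); however, the paper makes the equivalent leap by asserting $\inf_{\theta \notin V(\theta_0)} W(\theta) - W(\theta_0) > 0$ directly from identifiability, so your argument is no less complete than the paper's own.
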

\begin{cor}
    \label{cor:reg_cons}
    Suppose the assumptions of Theorem~\ref{thm:unif_cons} hold.  Let $\tilde{h}$ be the bandwidth used to construct $\hmp(x)$ in \eqref{eq:FSI_final} and let $\tgp(\theta'x,\theta)$ denote the estimator in \eqref{eq:gplus_est} computed using this bandwidth.  If there exists $\delta > 0$ such that
    $$
    \sup_{\norm{\theta - \theta_0} < \delta} \sup_x d(\tgp(\theta'x,\theta), g_\oplus(\theta'x,\theta)) = o_P(1),
    $$
    then $\sup_x d(\hmp(x),\mp(x)) = o_P(1).
    $
\end{cor}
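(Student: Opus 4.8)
The plan is to bound $\sup_x d(\hmp(x),\mp(x))=\sup_x d(\tgp(\hat\theta'x,\hat\theta),\gp(\theta_0'x,\theta_0))$ by inserting the population index function evaluated at the estimated direction and splitting via the triangle inequality into a stochastic estimation piece and a deterministic population-stability piece. Concretely, I would write
\[
d(\tgp(\hat\theta'x,\hat\theta),\gp(\theta_0'x,\theta_0)) \le d(\tgp(\hat\theta'x,\hat\theta),\gp(\hat\theta'x,\hat\theta)) + d(\gp(\hat\theta'x,\hat\theta),\gp(\theta_0'x,\theta_0)),
\]
take the supremum over $x$, and treat the two terms on the right separately.

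First, for the estimation term, Corollary~\ref{cor:theta_cons} gives that $\hat\theta$ converges to $\theta_0$ in probability, so the event $A_n=\{\norm{\hat\theta-\theta_0}<\delta\}$ has probability tending to one. On $A_n$,
\[
\sup_x d(\tgp(\hat\theta'x,\hat\theta),\gp(\hat\theta'x,\hat\theta)) \le \sup_{\norm{\theta-\theta_0}<\delta}\sup_x d(\tgp(\theta'x,\theta),\gp(\theta'x,\theta)),
\]
and the right-hand side is $o_P(1)$ by the hypothesis of the corollary. Hence the first term is $o_P(1)$.

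Second, for the population term $\sup_x d(\gp(\hat\theta'x,\hat\theta),\gp(\theta_0'x,\theta_0))$, since $\hat\theta$ is consistent it suffices to prove the deterministic continuity statement that $\Psi(\theta):=\sup_x d(\gp(\theta'x,\theta),\gp(\theta_0'x,\theta_0))\to 0$ as $\theta\to\theta_0$; the conclusion then follows because for any $\epsilon>0$ there is a radius $\delta_\epsilon$ with $P(\Psi(\hat\theta)>\epsilon)\le P(\norm{\hat\theta-\theta_0}\ge\delta_\epsilon)\to 0$. To establish $\Psi(\theta)\to 0$, I would run an argmin-continuity (Berge-type) argument driven by the separation condition (M). Fix $\epsilon>0$ and take the associated $\eta=\eta(\epsilon)$ from (M) applied at $\theta_0$, which holds uniformly in $x$. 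Introduce the uniform modulus
\[
\rho(\theta):=\sup_{\omega\in\Omega}\sup_x\left|\Lp(\omega,\theta'x,\theta)-\Lp(\omega,\theta_0'x,\theta_0)\right|.
\]
Using that $\gp(\theta'x,\theta)$ minimizes $\Lp(\cdot,\theta'x,\theta)$ and comparing to $\gp(\theta_0'x,\theta_0)$, a three-step chain through $\rho(\theta)$ yields $\Lp(\gp(\theta'x,\theta),\theta_0'x,\theta_0)-\Lp(\gp(\theta_0'x,\theta_0),\theta_0'x,\theta_0)\le 2\rho(\theta)$ for every $x$. If some $x$ had $d(\gp(\theta'x,\theta),\gp(\theta_0'x,\theta_0))>\epsilon$, then (M) would force this same difference to exceed $\eta$, giving $\eta<2\rho(\theta)$. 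Thus $\Psi(\theta)\le\epsilon$ whenever $\rho(\theta)<\eta/2$, and the problem reduces to showing $\rho(\theta)\to 0$ as $\theta\to\theta_0$.

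The main obstacle is precisely verifying that $\rho(\theta)\to 0$, i.e. the uniform-in-$(\omega,x)$ continuity of the conditional objective $\Lp(\omega,u,\theta)=E(d^2(Y,\omega)\mid\theta'X=u)$ as the projection direction $\theta$ varies. Two effects must be controlled at once: the shift of the conditioning argument, which is harmless because $\mc{X}$ is bounded by (F)i so that $\theta'x\to\theta_0'x$ uniformly over $x$, and the change in the conditioning direction itself, which alters the conditional law of $Y$ given $\theta'X=u$. For the latter I would express $\Lp$ through the conditional density of $\theta'X$ given $Y$ by Bayes' rule, writing $\Lp(\omega,u,\theta)=f_{\theta'X}(u)^{-1}\int_\Omega d^2(y,\omega)\,f_{\theta'X|Y}(u|y)\,\mathrm{d}P_Y(y)$, and invoke the boundedness of $\Omega$ (so that $d^2(Y,\omega)$ is bounded uniformly in $\omega$), the uniform lower bound on $f_{\theta'X}$ in (F)ii, and the derivative bounds on $f_{\theta'X}$ and $f_{\theta'X|Y}$ in (F)ii--iii to obtain an equicontinuity estimate in $(u,\theta)$ uniform over $\omega$ and $x$. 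This is where the regularity of the projected and conditional densities in the direction $\theta$ does the real work; the remaining steps are routine.
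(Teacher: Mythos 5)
Your proposal is correct and follows essentially the same route as the paper's proof: the identical triangle-inequality decomposition through $g_\oplus(\hat{\theta}'x,\hat{\theta})$, the same use of Corollary~\ref{cor:theta_cons} to ensure $\hat{\theta}$ lands in the $\delta$-ball so the hypothesis controls the estimation term, and control of the population term via continuity of $g_\oplus$. The only difference is one of detail: where the paper dispatches the second term in a single sentence ("assumption (M) implies continuity of $g_\oplus$"), you actually prove that continuity via the Berge-type separation argument reducing it to $\rho(\theta)\to 0$, which goes beyond the paper's own level of rigor (though note the last step still needs care, since the bounds in (F)ii--iii are on $u$-derivatives only, and continuity of the projected densities in the direction $\theta$ requires an additional argument, e.g.\ weak convergence of $\theta'X$ plus equicontinuity).
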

Regarding the condition on the bandwidth $\tilde{h}$ in the Corollary~\ref{cor:reg_cons}, observe that it will immediately hold for any $\delta > 0$ if $\tilde{h} = h.$ However, it may be possible for $\tilde{h}$ to decay more quickly than $h$ since uniform convergence is only required for $\theta$ near $\theta_0$ and not for all $\theta.$  A more precise specification of the potential gains requires further analysis.  As a starting point, one must obtain a rate of convergence for $\hat{\theta}.$  The usual approaches of either expanding $(W_n - W)(\theta)$ or controlling its continuity modulus near $\theta_0$ present non-trivial challenges due to the presence of estimates $\hat{Y}_i(\theta,h)$ in $W_n$, as these approaches would require some level of smoothness of these estimates in $\theta.$  It is conceivable that such properties could be derived efficiently for certain classes of smooth spaces, such as Riemannian manifolds, but we do not pursue these here.

\subsection{Qualitative Comparison with an Alternative Estimator}
\label{ss:Comp}

As mentioned previously, model \eqref{eq:fsi_model} is also studied independently in a recent preprint \citep{bhattacharjee2021single}.  The key difference in their approach is in the estimation of the objective function $W$ in \eqref{eq:W}.  Rather than averaging the prediction errors for each $\theta$ across all observations as in \eqref{eq:What}, \cite{bhattacharjee2021single} propose to bin the projected covariates $\theta'X_i$ into $M$ bins, where $M$ grows slowly in comparison to $n$.  Each bin is represented by a single pair of values $(X_m^*, Y_m^*)$, with $X_m^*$ being empirical mean of the predictors and $Y_m^*$ the emprical Fréchet mean of the responses in the $m$-th bin, respectively.  Then, \eqref{eq:What} is replaced by a similar version that averages prediction errors across the $M$ representative rather than the $n$ observed data points.

From a practical perspective, the choice to bin the data comes with some complications and additional choices that need to be made by the analyst, not least being the number of bins and placement of breaks, which can be difficult for large data sets, especially considering the different distributions of data points that can occur when varying the projection direction $\theta.$  While the referenced preprint does not give much motivation for the choice to bin, doing so could have some advantages in establishing theoretical properties.  For instance, in attempting to establish a central limit theorem for $\sqrt{n}(W_n(\theta) - W(\theta))$, one is again faced with the difficulty that $W_n$ involves the intermediate estimates $\hat{Y}_i(\theta,h).$  Unlike local linear estimators for Euclidean responses, these local Fréchet estimates have no closed form expression for general metric spaces.  Moreover, one cannot exploit properties of squared distances in Euclidean spaces to control the differences $d^2(Y_i, \hat{Y}_i(\theta,h)) - d^2(Y_i, g_\oplus(\theta'X_i, \theta))$ in the usual way.  However, if $\Omega$ is bounded, the reverse triangle inequality does yield 
$$
|d^2 (Y_i, \hat{Y}_i(\theta,h)) - d^2(Y_i, g_\oplus(\theta'X_i, \theta))| \leq 2\mathrm{diam}(\Omega) d(\hat{Y}_i(\theta,h), g_\oplus(\theta'X_i, \theta)),
$$
which can be controlled uniformly in $\theta$ and $X_i$ according to Theorem~\ref{thm:unif_cons}, but unfortunately not at a rate that is negligible compared to $\sqrt{n}.$  By binning the data, the same approach may yield an asymptotic limit as long as the uniform rate of the local Fréchet estimates shrinks faster than the effect of the increase in number of bins.  Nevertheless, it is unclear whether this represents a real phenomenon or is merely an artifact of analytic approach.

\section{Simulation Study on Spherical Data}
\label{sec:sims}

We implement our methodology when the responses lie on a Riemannian manifold object space. Let $\Omega=S^2$, the surface of the unit sphere in $\mathcal{R}^3$, with origin being the center. For any two points $y_1,y_2 \in S^2$, the geodesic distance between them is $d(y_1,y_2)=\arccos{(y_1'y_2)}$. We refer to a simulation setting as a unique combination of the sample size $n$, covariate dimension $p$, and noise level $\sigma^2 > 0$ that will be defined below.

\subsection{Data Generation}
For a given setting $(n, p, \sigma^2)$, independent and identically distributed data pairs $(X_i, Y_i) \in \mathbb{R}^p\times S^2$, $i = 1,\ldots,n$ were generated according to the following steps.

\begin{enumerate}
\item Independently generate predictor components $X_{ij},$ $j = 1,\ldots,p,$ as $X_{ij} = W_{ij}/\sqrt{p},$ where $W_{ij} \overset{\mathrm{iid}}{\sim} \mathcal{U}(-1,1)$.

\item With $\theta_0=(\theta_{01},\theta_{02},..., \theta_{0p})'$ being the true parameter, compute the latent predictor $U_i=\theta_{0}'X_i.$

\item Compute the conditional Fréchet mean at $X_i$, depending only on $U_i$, as
$$
\mp(X_i) = \left(\sqrt{\left(1-\frac{U_i^2}{{p}}\right)} \cos \left( \frac{\pi U_i}{\sqrt{p}}\right),\sqrt{\left(1-\frac{U_i^{2}}{{p}}\right)} \sin \left(\frac{\pi U_i}{\sqrt{p}}\right), \frac{U_i}{\sqrt{p}}\right).
$$

\item Generate a noise vector $Z_i$ as follows.  First, let $(V_{i1}, V_{i2})$ be an orthonormal basis for the tangent space $\mathrm{span}\{m_\oplus(X_i)\}^{\perp}$. Next, for a given noise level $\sigma^2,$ generate $C_i=(c_{i1},c_{i2})' \stackrel{iid}{\sim} N_2(\boldsymbol{0},\sigma^2 \boldsymbol{I}_2)$. Finally, set $Z_i=c_{i1}V_{i1}+c_{i2} V_{i2}$.

\item Generate the spherical response variable as
$$
Y_i =\cos \left(\left\|Z_{i}\right\|_{E}\right) m_\oplus(X_i) + \sin \left(\left\|Z_{i}\right\|_{E}\right) \frac{Z_{i}}{\left\|Z_{i}\right\|_{E}}.
$$
\end{enumerate}

\noindent
Steps 4 and 5 produce a point $Y_i$ on the sphere with conditional Fréchet mean equal to $m_\oplus(X_i).$  To give an idea of what the responses look like relative to the conditional Fréchet mean function for a given noise level, Figure~\ref{fg4.1} shows example data sets and corresponding estimates for $p = 5$ under two noise scenarios ($\sigma^2=0.4$ and $\sigma^2=0.8$) and three sample sizes ($n = 50, 100, 200$).
\newline
\begin{figure}[p]
\centering
\includegraphics[width=4.8cm]{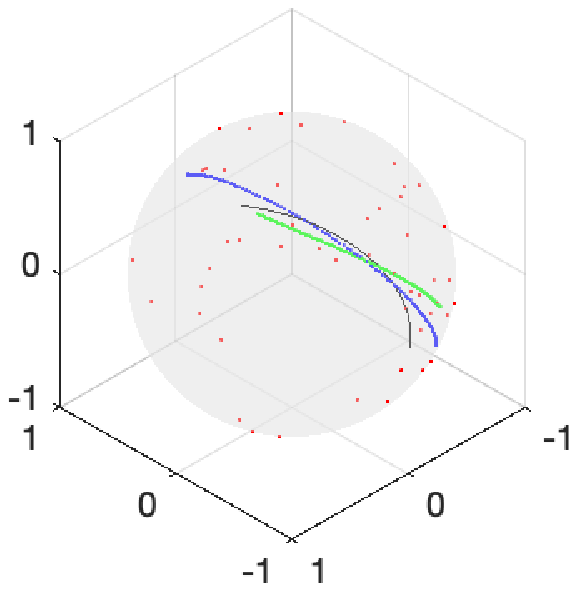}
\includegraphics[width=4.8cm]{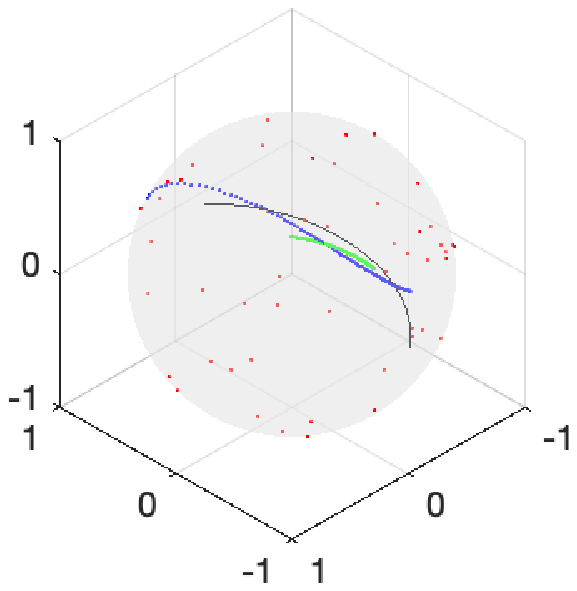} \\
\includegraphics[width=4.8cm]{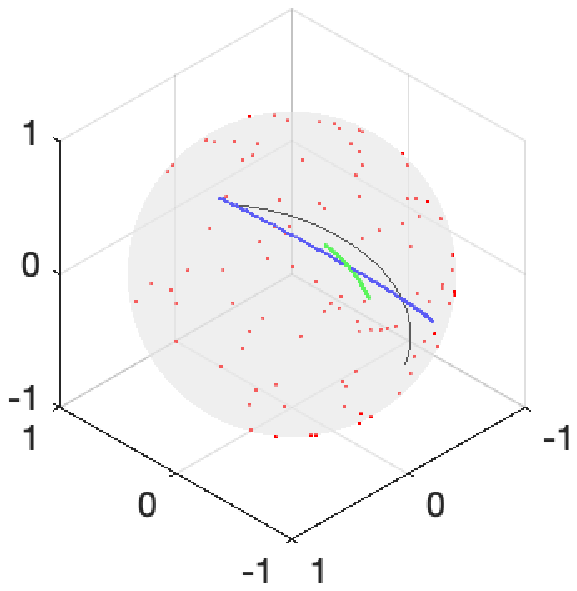}
\includegraphics[width=4.8cm]{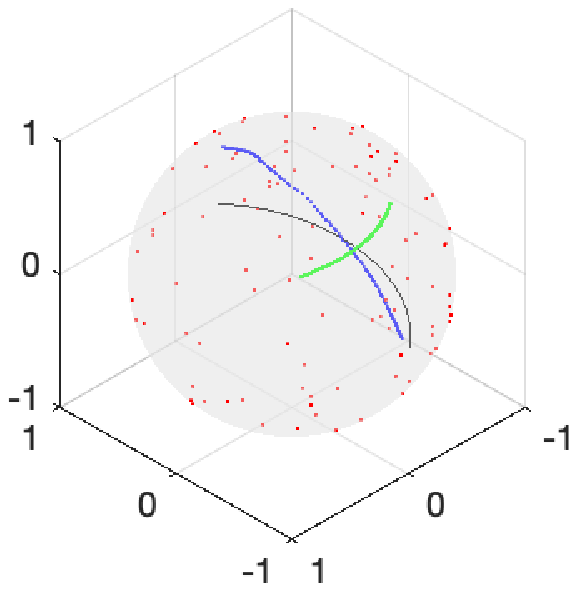} \\
\includegraphics[width=4.8cm]{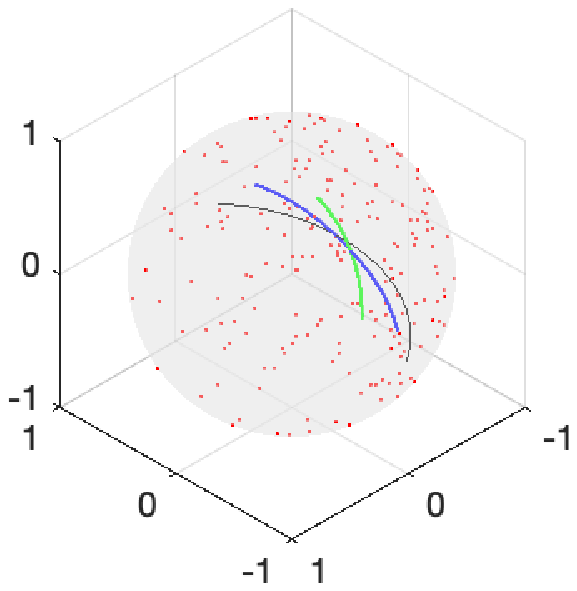}
\includegraphics[width=4.8cm]{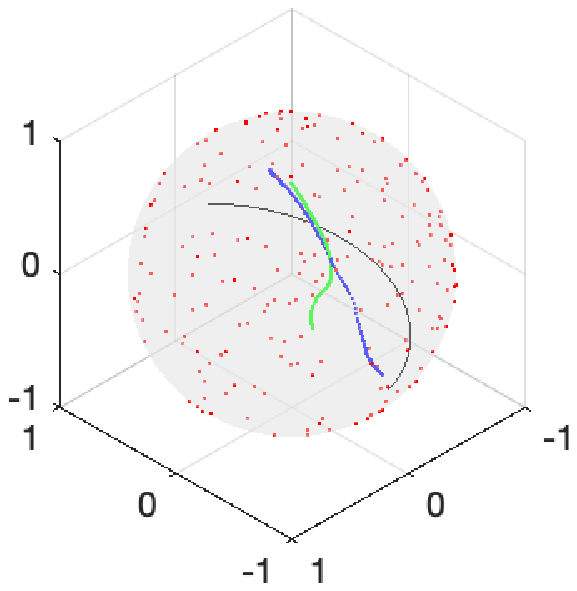}
\caption{\footnotesize Examples of simulated data sets for covariate dimension $p=5$, corresponding to sample sizes $n=50$ (top row), $n = 100$ (middle row), and $n=200$ (bottom row), and noise levels $\sigma^2 = 0.4$ (left column) and $\sigma^2 = 0.8$ (right column). The red dots represent values of $Y_i$ in the sample, while the regression function values $m_\oplus(x)$ are shown by the black curve for $x \in [0,1]^p$. The blue dots represent the FIS fitted responses for $n$ observations using \eqref{eq:FSI_final}. The green dots are the fitted responses obtained by computing \eqref{eq:gplus_est} for a value of $\theta$ far from the true value $\theta_0.$ }
\label{fg4.1}
\end{figure}

\subsection{Computational Details}
For each simulated data set, estimation was performed using a grid for the bandwidth $h$. For given values $\theta$ and $h$, the local Fréchet estimate $\hat{g}_{\oplus}(u,\theta)$ in \eqref{eq:gplus_est} was obtained for values $u =\theta'X_i,$ $i = 1,\ldots,n,$ using a non-convex optimization trust region algorithm as implemented in ManOpt toolbox for Matlab \citep{boumal2014manopt,petersen2019frechet}.  As the algorithm requires an initial estimate, we computed the leave-one-out Nadaraya-Watson estimate
$$
\tilde{{Y}}_{(i)}^{(NW)}\left(h, \theta\right) = \frac{\sum_{l \neq i} Y_{l} K\left(\left[X_{i}' \theta-X_{l}'\theta \right] / h\right)}{\sum_{l \neq i}  K\left(\left[X_{i}' \theta-X_{l}'\theta\right] / h\right)}
$$
for each observed predictor values $X_i.$  Then, the initial estimate that is entered into the algorithm is obtained by projecting onto the sphere, i.e.
$$
\hat{Y}_{(i)}^{(0)}(h,\theta)=\frac{\tilde{{Y}}_{(i)}^{(NW)}\left(h, \theta\right)}{\|\tilde{{Y}}_{(i)}^{(NW)}\left(h, \theta\right)\|_{E}}.
$$

Computation of the estimate $\hat{\theta}(h)$ by optimizing the criterion $W_n$ is the more challenging task, particularly for larger values of $p$, since there is no explicit form for the gradient or Hessian.  Numeric evaluation of the gradient can also be quite expensive when $n$ is large due to the need to repeatedly perform local Fréchet regression for each data point.  In addition, any optimization procedure is sensitive to the starting value for $\theta,$ particularly for larger $p$, further increasing the computational burden since multiple starting values must be used.  Therefore, we took the following approach.

First, a collection $\{\theta_{k}:\, k = 1,\ldots,K_p\}$, of starting values was randomly generated for each setting $(n,p,\sigma^2)$, with the same starting values being used for all data sets under that setting.  The number of starting points was taken to be $K_2 = 10$, $K_5 = 50$, and $K_{10} = 100,$ so that these increase with the dimension $p.$  We then reduce this initial pool of starting values by optimizing a proxy to $W_n$ given by
\begin{equation}
    \label{eq:WnProxy}
    W_n^*(\theta) = n^{-1}\sum_{i = 1}^n d^2(Y_i, Y_i^*(h,\theta)),
\end{equation}
where
$$
Y_i^*(h,\theta) = \frac{\sum_{j = 1}^n r_h(X_j, X_i, \theta)Y_j}{\norm{\sum_{j = 1}^n r_h(X_j, X_i, \theta)Y_j}_E}
$$
is the projection onto the sphere of the local linear estimate of the Euclidean regression function $E(Y|\theta'X = u)$ at $u = \theta'X_i.$  The advantage of using this proxy is that an analytic gradient and Hessian for $W_n^*$ are available, so that optimization of $W_n^*$ is relatively fast. Using each of the $K_p$ starting values, we obtain as many initial estimates $\tilde{\theta}_k(h),$ $k = 1,\ldots,K_p.$ This optimization was executed using the \texttt{fmincon} function in Matlab with the \texttt{trust-region-reflective} option for the optimizer. In this optimization, $\theta$ was represented by its polar coordinates to handle the constraints in a simple way. 

In the final optimization step, $\tilde{K}_p$ of the initial estimates $\tilde{\theta}_k$ are retained as starting values based on having the lowest values of the proxy criterion $W_n^*$, with $\tilde{K}_2 = 2,$ $\tilde{K}_5 = 3,$ and $\tilde{K}_{10} = 5.$ For each starting value, $W_n$ is directly optimized using \texttt{fmincon} with the \texttt{SQP} option for the optimizer that does not require a gradient input, again using the polar representation of $\theta$.  The value of $\theta$ that, at convergence, attains the lowest value of $W_n$ is taken to be the estimate $\hat{\theta}(h)$ for that bandwidth.  Lastly, fitted values are computed using \eqref{eq:FSI_final} by setting $\tilde{h} = h$, $\hat{\theta} = \hat{\theta}(h),$ and $\hat{Y}_i(h) = \hmp(X_i).$

As a competitor to the FSI model, we also implemented a multivariate local Fréchet estimator. The estimator is defined as in \eqref{eq:fr_local_empirical}, with the only difference being that the weights $\hat{s}_h(X_i, x)$ are computed from multivariate local linear regression, since $X_i \in \mathbb{R}^p,$ using a product Gaussian kernel with the same bandwidth for each predictor.  The optimization for this estimator was performed using the ManOpt trust region algorithm described above.

\subsection{Performance Evaluation}

Data were generated under 18 unique parameter settings using samples sizes $n=50,100,200$, for noise levels $\sigma^2=0.4, 0.8$, and for dimensions $p=2,5,10$, with $200$ simulation runs per setting. Let $s=1,\ldots,200$ be the index for simulations within a given setting, and $(X_i^s,Y_i^s)$ denoted the simulated data.  Then, from each simulated data set and bandwidth we obtain an estimate $\hat{\theta}^{s}(h) \in \mathbb{R}^p$ and fitted values $\hat{Y}_{i}^{s}(h)$, $i = 1,\ldots,n$ from the FSI model, as well as fitted values $\check{Y}_i^s(h)$ from the multivariate local Fréchet (mLF) estimator. The following performance metrics were computed for each simulated data set across the entire range of bandwidths.

\begin{enumerate}
\item As the parameter space $\Theta_p$ is a subset of the $(p - 1)$-dimensional unit sphere, a natural measurement of empirical squared error for the $s$-th simulated data set is 
\begin{equation}
\operatorname{SE}(\hat{\theta}^s(h))= \left[{\arccos}\left(
 \Big| 
\theta_0'\,\hat{\theta}^{s}(h) \Big|  
\right)\right]^2,
\label{eq:MSE_theta}
\end{equation}
where we have introduced the absolute value to account for the fact that $\theta_0$ and $-\theta_0$ are indistinguishable from the data.
\item To evaluate the estimation error in regression for the FSI model, the mean square estimation error (MSEE) for the $s$-th simulated data set was quantified by 
\begin{equation}
\mathrm{MSEE}_{\oplus, \mathrm{FSI}}^{(s)}(h)= \frac{1}{n} \sum_{i=1}^{n}\left[\operatorname{arccos}\left(m_{\oplus}(X_i^s)' \hat{Y}_{i}^{s}( h)\right)\right]^{2}
\label{eq:MSEE_FSI}
\end{equation}

\item To evaluate the estimation error in regression for the multivariate local Fréchet estimator, the mean square estimation error (MSEE) for the $s$-th simulated data set was quantified by 
\begin{equation}
\mathrm{MSEE}_{\oplus,\mathrm{mLF}}^{(s)}(h)= \frac{1}{n} \sum_{i=1}^{n}\left[\operatorname{arccos}\left(m_{\oplus}(X_i^s)' \check{Y}_{i}^{s}( h)\right)\right]^{2}
\label{eq:MSEE_mLF}
\end{equation}

\end{enumerate}

Tables \ref{tab:sim_lownoise} and \ref{tab:sim_highnoise} show empirical performance metrics for the various simulation settings considered. In these tables, the average and standard deviation of each metric across simulations is reported.  For each metric, the reported values are for the bandwidth value in the chosen grid that minimizes the corresponding average across simulations. We observe that the average squared estimation errors and their standard deviations for the FSI estimator of the coefficient $\theta_0$, and both FSI and mLF estimators of the regression function $\mp(x)$, all behave in the expected fashion.  Namely, they decay toward zero with increasing sample size and are larger for higher values of $p$ and for the higher noise level.  However, the FSI regression estimation errors are overall smaller than those of the multivariate local Fréchet regression estimator when both are evaluated using their optimal bandwidth, with differences becoming more pronounced for larger covariate dimensions $p$.

\begin{table}[!ht]
\centering
\caption{\footnotesize 
Simulation results for settings with low noise, $\sigma^2=0.4$. Here $p$ and $n$ are covariate dimension and sample size, respectively. The third column is the average of the values $\mathrm{SE}(\hat{\theta}^s(h))$ from \eqref{eq:MSE_theta} across simulations, with standard deviation in parentheses.  Columns 4 and 5 give the averages of $\text{MSEE}_{\oplus,\mathrm{FSI}}^{(s)}(h)$ and $\text{MSEE}_{\oplus,\mathrm{mLF}}^{(s)}(h)$ from \eqref{eq:MSEE_FSI} and \eqref{eq:MSEE_mLF}, respectively, across simulations, with standard deviation given in parentheses.  For each of the metrics in columns 3--5, results are shown for the bandwidth that minimizes the reported average of that metric and are rounded to 3 significant digits.
}
\ \\
\begin{tabular}{|c|c|c|c|c|c|}
\hline 
&&&& \\
\ $p$\  & $n$ & \ \ \ \ \ Avg. $\mathrm{MSE}$ \ \ \ \ \ & \ \ \ Avg. $\mathrm{MSEE}_{\oplus,\mathrm{FSI}}$ \ \ \ & Avg. $\mathrm{MSEE}_{\oplus,\mathrm{mLF}}$ \\
&&&& \\
\hline 
&&&& \\
& 50 & 0.032 \ \ (0.047) & 0.063 \ \ (0.039) & 0.078 \ \  (0.042)\\
2 & 100 & 0.014 \ \ (0.020) & 0.030 \ \  (0.017) & 0.040 \ \  (0.020)\\
& 200 & 0.006 \ \ (0.008) & 0.016 \ \  (0.008) & 0.021 \ \  (0.010)\\
&&&& \\
\hline 
&&&& \\
& 50 & 0.326 \ \ (0.283) & 0.100 \ \  (0.051) & 0.143 \ \  (0.054)\\
5 & 100 & 0.168 \ \ (0.132) & 0.050 \ \ (0.026) & 0.074 \ \  (0.029)\\
& 200 & 0.071 \ \ (0.056) & 0.023 \ \  (0.012) & 0.036 \ \  (0.015)\\
&&&& \\
\hline 
&&&& \\
& 50 & 0.938 \ \ (0.519) & 0.166 \ \  (0.064) & 0.251 \ \  (0.081)\\
10 & 100 & 0.544 \ \ (0.386) & 0.082 \ \  (0.038) & 0.128 \ \  (0.038)\\
& 200 & 0.285 \ \ (0.152) & 0.039 \ \  (0.016) & 0.065 \ \  (0.018) \\
&&&& \\
\hline
\end{tabular}
\label{tab:sim_lownoise}
\end{table}
\begin{table}[!hb]
\centering
\caption{\footnotesize Simulation results for the settings with high noise, $\sigma^2=0.8$. Descriptions of column names and contents correspond to those given in Table~\ref{tab:sim_lownoise}.}
\ \\
\begin{tabular}{|c|c|c|c|c|c|}
\hline 
&&&& \\
\ $p$\  & $n$ & \ \ \ \ \ Avg. $\mathrm{MSE}$ \ \ \ \ \ & \ \ \ Avg. $\mathrm{MSEE}_{\oplus,\mathrm{FSI}}$ \ \ \ & Avg. $\mathrm{MSEE}_{\oplus,\mathrm{mLF}}$ \\
&&&& \\
\hline 
&&&& \\
 & 50 & 0.154  \ \  (0.284) & 0.231 \ \   (0.160) & 0.285 \ \   (0.177)\\
2 & 100 & 0.090 \ \   (0.244) & 0.130 \ \   (0.107) & 0.163 \ \   (0.100)\\
& 200 &  0.025 \ \   (0.037) & 0.063 \ \   (0.038) & 0.084 \ \   (0.048)\\
&&&& \\
\hline 
&&&& \\
& 50 & 1.038 \ \   (0.624) & 0.376 \ \   (0.180) & 0.558 \ \   (0.238)\\
5 & 100 &  0.680 \ \   (0.555) & 0.208 \ \   (0.118) & 0.298 \ \   (0.142)\\
& 200 & 0.367 \ \   (0.350) & 0.010 \ \   (0.056) & 0.143 \ \   (0.057)\\
&&&& \\
\hline 
&&&& \\
& 50 & 1.481 \ \   (0.528) & 0.496 \ \   (0.190) & 0.927 \ \   (0.302)\\
10 & 100 & 1.297 \ \   (0.578) & 0.298  \ \  (0.104) & 0.535 \ \   (0.171)\\
& 200 & 0.869 \ \   (0.477) & 0.160  \ \  (0.069) & 0.276 \ \   (0.083)\\
&&&& \\
\hline
\end{tabular}
\label{tab:sim_highnoise}
\end{table}

\clearpage

Next, we more closely examine the empirical sampling distribution of $\hat{\theta}(h)$ across different values of $n$ for $p=2$, since these can be easily visualized via histograms of the (scalar) polar coordinate representations $\hat{\eta}(h)$. Specifically, Figure \ref{fig:eta_hat_histogram_p2} shows the empirical distribution of $\hat{\eta}^{(s)}(h)$ for different values of $n$ and $\sigma^2$, where $h$ is the same minimizing bandwidth used to compute the average of the $\mathrm{SE}(\hat{\theta}^s(h))$ values for $p = 2$ in Tables~\ref{tab:sim_lownoise} and \ref{tab:sim_highnoise} for $\sigma^2 = 0.4$ and $\sigma^2 = 0.8,$ respectively.  For reference, the true polar coordinate parameter $\eta_0$ is superimposed as the red vertical line. In all cases, as $n$ increases the empirical sampling distribution becomes more concentrated near $\eta_0.$ 

\begin{figure}[!t]
\centering
\includegraphics[width=12cm]{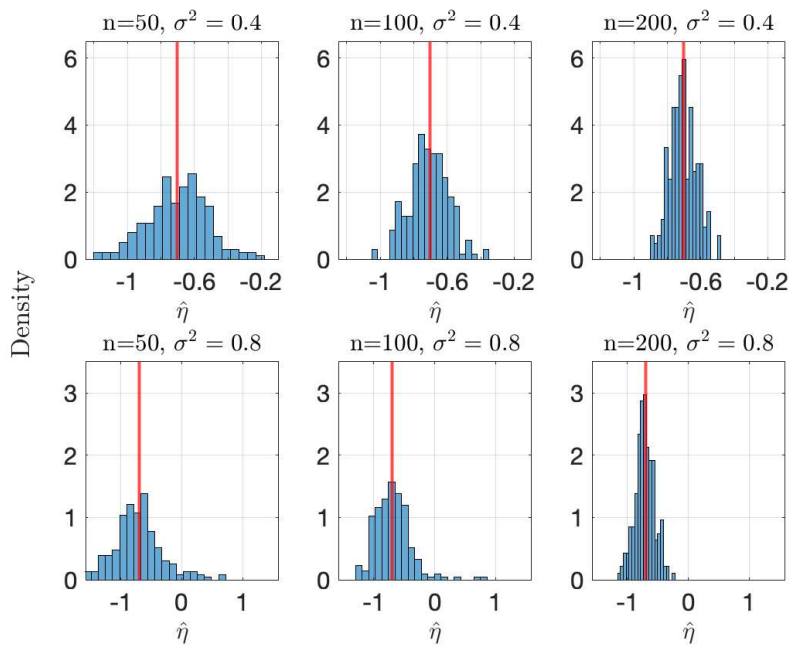}
\caption{\footnotesize For $p=2$ and sample sizes $n=50$ (left panels), $n=100$ (middle panels), $n=200$ (right panels) the simulated empirical distributions of $\hat{\eta}(h)$, the polar coordinate of $\hat{\theta}^s(h)$, are represented by histograms, with $h$ chosen to minimize the average of $\mathrm{SE}(\hat{\theta}^s(h))$ across simulations. In the top and bottom rows we have low noise ($\sigma^2=0.4$) and high noise ($\sigma^2=0.8$) scenarios respectively. The vertical red line represents the polar coordinate of $\theta_0,$ $\eta_0$ on the floor of the plot.}
\label{fig:eta_hat_histogram_p2}
\end{figure}

Finally, to more fully examine the estimation performance of the overall regression function $\mp(x)$ more closely, Figure \ref{fig:boxplots_msee_sphere} juxtaposes the boxplots of $\mathrm{MSEE}_{\oplus,\mathrm{FSI}}^{(s)}(h)$ from \eqref{eq:MSEE_FSI} for each simulation setting on the log scale, where $h$ is the minimizing bandwidth used for this metric in Tables~\ref{tab:sim_lownoise} and \ref{tab:sim_highnoise}. The variation increases with $p$, but under each $p$ it decreases with $n$. These reflect the numerical summaries given in Tables~\ref{tab:sim_lownoise} and \ref{tab:sim_highnoise}.

\begin{figure}[p]
\centering
\includegraphics[width=9.6cm]{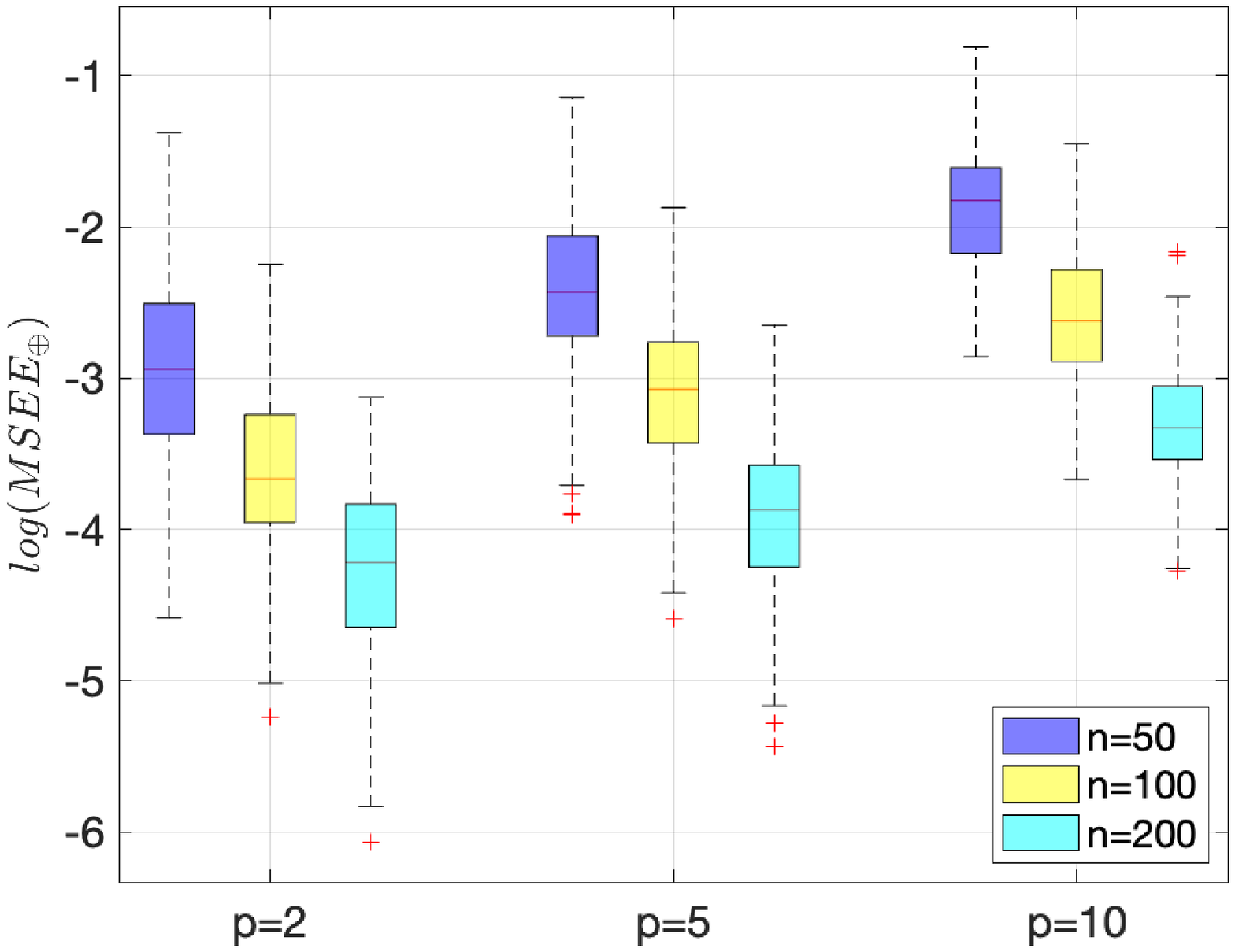} \qquad
\includegraphics[width=9.6cm]{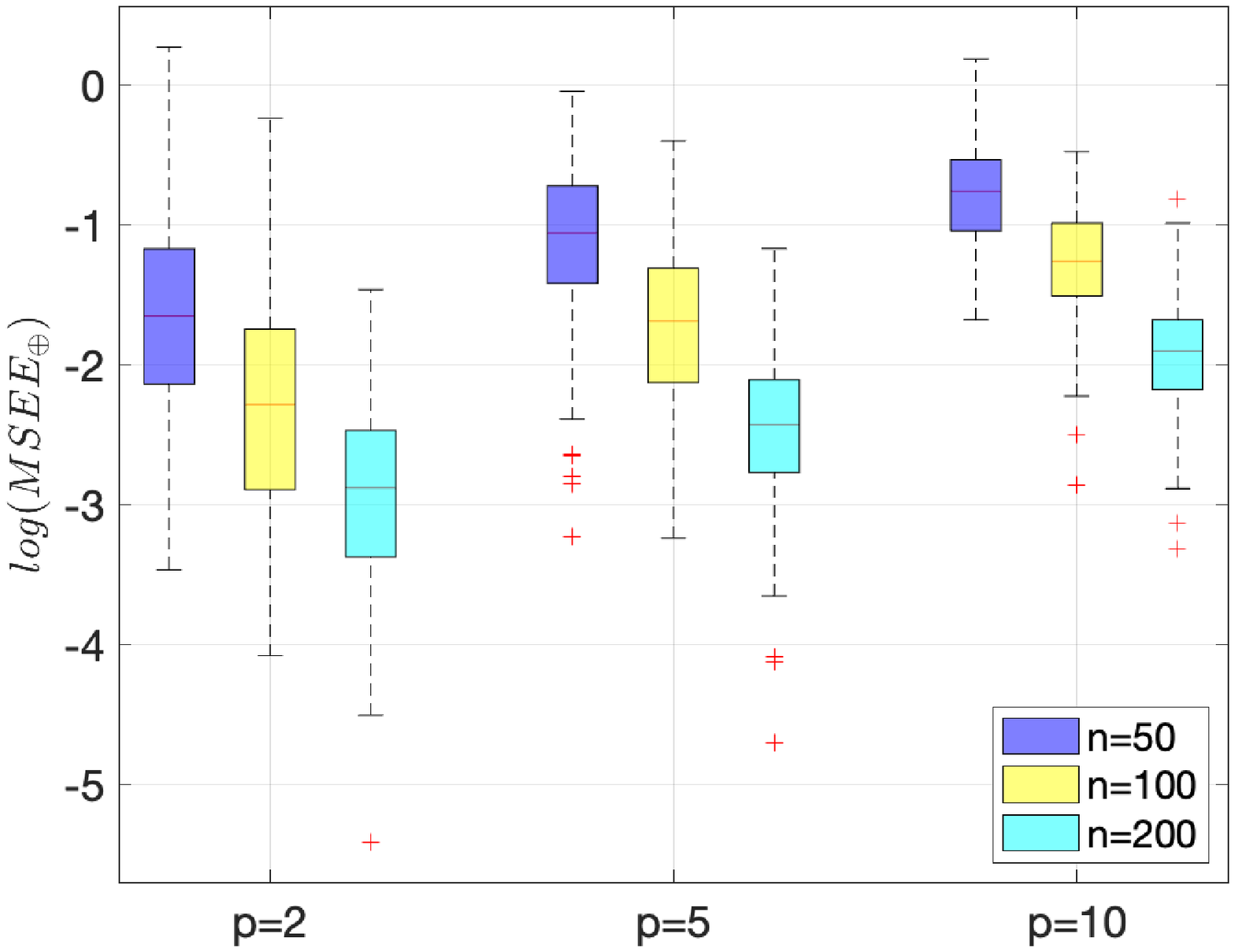}
\caption{\footnotesize For each covariate dimension  $p=2,5,10$; boxplots of $\log(\text{MSEE}_{\oplus_\mathrm{FSI}}^{(s)}(h))$ from \eqref{eq:MSEE_FSI} are given over all simulations for the optimizing bandwidths used in Tables~\ref{tab:sim_lownoise} and \ref{tab:sim_highnoise}, in each panel from left to right for sample sizes $n=50, 100, 200$ as indicated by blue, yellow, and 
cyan in the plot, respectively. The top and bottom panels correspond to low and high noise scenarios, respectively, with different vertical axis ranges.}
\label{fig:boxplots_msee_sphere}
\end{figure}

\vspace{0.5in}
\clearpage
\clearpage

\section{Regression of Mortality Distributions}
\label{sec:mort}

To demonstrate the application of our method, we consider human mortality data at the country level.  The goal is to model the dependence of age-at-death distributions
for a given year based on country-specific covariates. For this illustration, the year 2013 was selected, and human mortality data were sourced for 39 countries from the Human Mortality Database (HMD, \cite{HMD:2021} \href{https://www.mortality.org/}{\color{blue} {www.mortality.org}}) for this year. The HMD provides data for 41 countries; Hong Kong and Taiwan were omitted due to lack of availability of records for all covariates used in this illustrative example. The data for each country are structured as life-tables; for integer-valued age $j$, $0 \leq j \leq 110$, the life table provides the size of the population $m_j$ which is at least $j$ years old, normalized so that the total population is $m_0=100,000$. By computing differences, one can compute histograms of age-at-death that are specific to each country and year.  In order to focus on adult mortality, we consider the histogram over the age range $[20,110]$.  

The impacts of many socioeconomic, environmental, and other variables on health outcomes have been extensively researched.  For this illustration, we chose five covariates that, intuitively, have strong potential to influence mortality patterns of a nation.  These include year-on-year (YoY) percentage change in GDP (GDPC \cite{GDP2013}), carbon dioxide emissions in metric tons per capita (CO2E \cite{CO2E2013}), current health care expenditure as a percentage of GDP (HCE \cite{HCE2013}), the human development index (HDI \cite{HDI2013}), and infant mortality per 1000 live births (IM \cite{childMortalityData2013}) \citep{hassanipour2016incidence,ghoncheh2016incidence,rasoulinezhad2020mortality,granados2005recessions,eyer1977prosperity,higgs1979cycles,graham1992poorer,owusu2021relationship,lippi2016no,rothberg2010little}. Hence, $X_i \in \R^5$ constitutes the covariate vector for the $i-$th country, $i=1,\ldots,39$. 

To apply the proposed FSI model, the density histograms constructed from the lifetables were smoothed and then used to produce a quantile function for each country.  This smoothing step was performed using the \texttt{CreateDensity} function in the R package \texttt{frechet} in order to obtain a smooth density, with the default cross-validated bandwidth choice, followed by conversion to a quantile function using the function \texttt{dens2quantile} in the package \texttt{fdadensity} \citep{frechet,fdadenspackage}.  These constructed distributions will be referred to as observed distributions, and are visualized in Figure~\ref{fig:dens_2013_estimation}.  
\begin{figure}[!ht]
\centering
\includegraphics[width = 9.6cm]{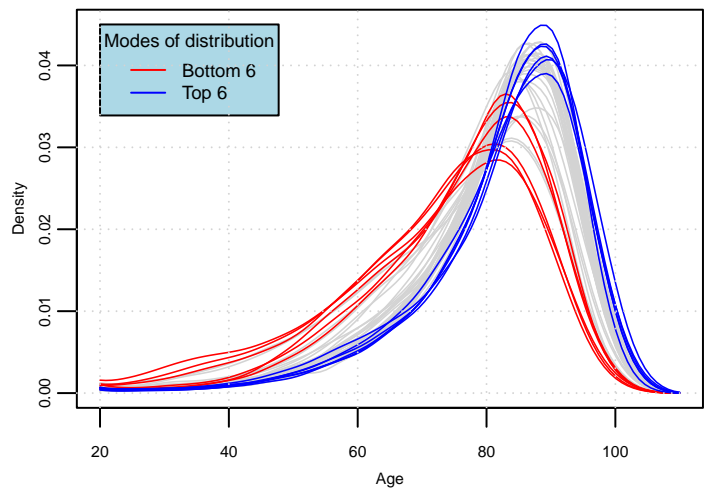}
\caption{\footnotesize The estimated densities for each country for year 2013 over the age interval [20,110]; the countries with top 6 and bottom 6 mode ages are highlighted in blue and red colors respectively. The red colored densities include Russian Federation, Belarus, Ukraine, Hungary, Slovakia, Latvia. The densities in blue include Australia, Canada, Spain, France, Japan, Switzerland.}
\label{fig:dens_2013_estimation}
\end{figure} 

Let $Y_i$ represent the observed age-at-death distribution with quantile function $q_i$ for the $i$-th country, and $X_i$ a vector of covariates, $i = 1,\ldots,40$, during 2013.  The random object responses $Y_i$ are assumed to belong to the space $\Omega$ of probability distributions $G$ on $\mathbb{R}$ with finite second moment, i.e. $\int_{\mathbb{R}}x^2 \,dG(x) < \infty$. For two distributions $G_1,G_2 \in \Omega$, the squared Wasserstein distance \citep{vill:03} between them is
\begin{equation}
  d_{W}^2\left(G_{1}, G_{2}\right)=\int_{0}^{1}\left(G_{1}^{-1}(t)-G_{2}^{-1}(t)\right)^{2} \mathrm{d}t,
  \label{eq:wass_metric}
\end{equation}
where $G_{1}^{-1},G_{2}^{-1}$ are the quantile functions corresponding to $G_1,G_2$ respectively. The above form of the metric makes obvious the point raised previously that the Wasserstein space is isometric to a subset of the Hilbert space $L^2[0,1]$.  Thus, it is a flat Hadamard space, though it is convex and not linear.  While one may, to some extent, employ linear methods to analyze such data, practical and theoretical problems emerge even in this simple case.  From a practical standpoint, certain critical outputs, such as fitted values, that should be distribution-valued may not be so when linear methods are applied.  These may be easily remedied using an ad hoc correction, but this is a clear disadvantage compared to the object treatment provided by Fréchet methods that will always respect such constraints.  Beyond estimation, use of the non-linear geometry has distinct advantages when it comes to inference, particularly in the formulation of error models and uncertainty assessment, even in the setting of univariate distributions \citep{panaretos2016amplitude,petersen2019wasserstein,petersen2021wasserstein}.  In addition, although univariate distributions are employed in this illustrative example, the model is equally applicable to multivariate distributions \citep{zemel2019frechet}, in which case the Wasserstein space is no longer flat.

Letting $(X,Y)$ denote a generic covariate-distribution pair, the target is the Fréchet regression function $\mp$ as defined in \eqref{eq:fr_regressionfunction}, for which we will assess seven competing models for object data. Specifically, $\mp$ was estimated using global and local Fréchet regression techniques, the latter for each individual predictor, yielding six competitors to the proposed FSI model in \eqref{eq:fsi_model}.  

\subsection{Computational Details}
\label{ss:mortComp}

The computations for global and local Fréchet estimates, the letter for any fixed bandwidth, were carried out using the existing functionalities of the \texttt{frechet} package \cite{frechet}.  For the FSI model, for any specified $\theta$ and bandwidth $h$, this package was also used to compute $\hgp$ in \eqref{eq:gplus_est}.  To estimate $\theta_0$ via \eqref{eq:theta_est}, the \texttt{optim} command was used with option \texttt{"L-BFGS-B"} \cite{byrd1995limited} with a lattice of $3^4=81$ starting points of polar coordinates $\eta \in [-\pi/2,\pi/2]^4$. The predictors were each centered and scaled to have sample mean zero and unit sample variance prior to fitting all models. For simplicity we use the same acronyms for the standardized covariates as previously given for the unstandardized ones, with the $X_i$ values in each model being on the standardized scale.

As a first step, for each of the local Fréchet regression fits and the FSI model fit, a single bandwidth was selected by leave-one-out cross validation on the entire data set; no bandwidth is needed for global Fréchet regression. With $m$ denoting a model index corresponding to the FSI model or one of the local Fréchet fits, let $\hat{Y}_{i}^{(m, -i)}(h)$ denote the fitted value for the $i$-th country produced by the estimate of model $m$ using all countries except county $i$ and with bandwidth $h.$  Then the chosen bandwidth is
\begin{equation}
    \label{eq:hopt_loocv}
    h^*_m = \mathop{\argmin}\limits_{h \in \mathfrak{H}_m} \ \sum_{i = 1}^n d_W^2(Y_i, \hat{Y}_{i}^{(m,-i)}(h)),
\end{equation}
where $\mathfrak{H}_m$ is a grid of potential bandwidth choices for the given model.  For the local Fréchet fits of each individual predictor, this step was executed using built-in functionalities of the \texttt{frechet} package.  For the FSI bandwidth, the model was fit for each bandwidth in a pre-defined grid as described above, then $h^*_{\mathrm{FSI}}$ was computed as in \eqref{eq:hopt_loocv}.

\subsection{Model Comparisons}
\label{ss:mortModels}

To assess model performance, two metrics were computed.  The first metric, termed the Fréchet $R^2$, quantifies the quality of model fit by in-sample performance.  Specifically, for a given model $m$, let $\hat{Y}_i^{(m)}$ denote the fitted value that it produces for the $i$-th country.  Furthermore, let 
$$
\hat{\omega}_\oplus = \mathop{\argmin}\limits_{\omega \in \Omega} \, \frac{1}{n} \son d_W^2(Y_i, \omega)
$$ 
denote the sample Fréchet mean.  Indeed, this is simple to compute due to the nature of $d_W$ in \eqref{eq:wass_metric}, as it is known that $\hat{\omega}_\oplus$ is the distribution with quantile function $n\inv\son q_i.$  The Fréchet $R^2$ for model $m$ is
\begin{equation}
\label{eq:fr_rsquared}
R_{\oplus,m}^2=1-\frac{\sum_{i=1}^{n}d_W^2(Y_i,\hat{Y}_i^{(m)})}{\sum_{i=1}^{n}d_W^2(Y_i,\hat{\omega}_{\oplus})},
\end{equation}
which measures the proportion of Wasserstein-Fréchet variability in the data that is explained by the model. 

The second performance metric is based on out-of-sample performance, in which the data were randomly split into a testing set of size 10 and training set of size 29, with 30 distinct random splits being executed.  With $k=1,\ldots,30$ representing the index of each unique split of the data, denote by $Y_{[k,j]},$ $j=1,\ldots,10,$ the age-at-death distribution for the $j$-th country in the $k$-th testing set, and by $\hat{Y}_{[k,j]}^{(m)}$ the predicted distribution for the same country using the fit of model $m$ produced by the $k$-th training set.  The error for the $k$-th split and model $m$ is then quantified by
\begin{equation}
\label{eq:fr_mspe}
\mathrm{MSPE}_k^{(m)} = \frac{1}{10}\sum_{j=1}^{10}d_W^2\left(Y_{[k,j]},\hat{Y}_{[k,j]}^{(m)}\right).
\end{equation}
For local Fréchet and FSI model fits, the bandwidth used for each training set was fixed to be the value $h^*_m$ in \eqref{eq:hopt_loocv}.

\begin{table}[!t]
\centering
\caption{\footnotesize Performance metrics for comparing seven Fréchet regression fits in three classes of models: (GF) Global Fréchet, (LF) local Fréchet, (FSI) Fréchet single index. The predictor used for each local Fréchet fit is indicated for each subcolumn below LF: (HDI) Human Development Index; (HCE) current health care expenditure as a percentage of GDP; (GDPC) GDP year-over-year growth percentage; (IM) infant mortality; (CO2E) carbon dioxide emissions. The $R_{\oplus}^2$ row gives the Fréchet $R^2$ values defined in \eqref{eq:fr_rsquared}.   The MSPE row gives the average out-of-sample mean-square prediction error ($\mathrm{MSPE}$), defined in \eqref{eq:fr_mspe}, across the 30 data splits. The SD(MSPE) row gives the standard deviation of the out of sample prediction errors across the 30 data splits.}
\resizebox{12cm}{!}{
\begin{tabular}{|l||c||c|c|c|c|c||c|}
\hline  &  & \multicolumn{5}{c||}{ } &  \\
 Evaluation & & \multicolumn{5}{c||}{LF} &  \\
\cline{3-7}
Measures & GF &&&&&& FSI \\
  &  & HDI & HCE & GDPC & IM & CO2E & \\
\hline &&&&&&& \\
$R_{\oplus}^2$  & 0.697 & 0.688 & 0.521 & 0.132 & 0.433 & 0.162 & 0.827 \\  
&&&&&&& \\ \hline
&&&&&&& \\
$\mathrm{MSPE}$ & 6.23 & 6.87 & 6.93 & 13.51 & 12.08 & 13.74 & 4.35 \\ 
&&&&&&& \\ 
\hline
&&&&&&& \\
$\mathrm{SD(MSPE)}$ & 2.45 & 5.45 & 3.44 & 4.82 & 10.03 & 5.40 & 2.11 \\ 
&&&&&&& \\
\hline
\end{tabular} }
\label{tab:mspe_rsq_oplus_compare}
\end{table}

\begin{figure}[ht!]
\centering
\includegraphics[width=12cm]{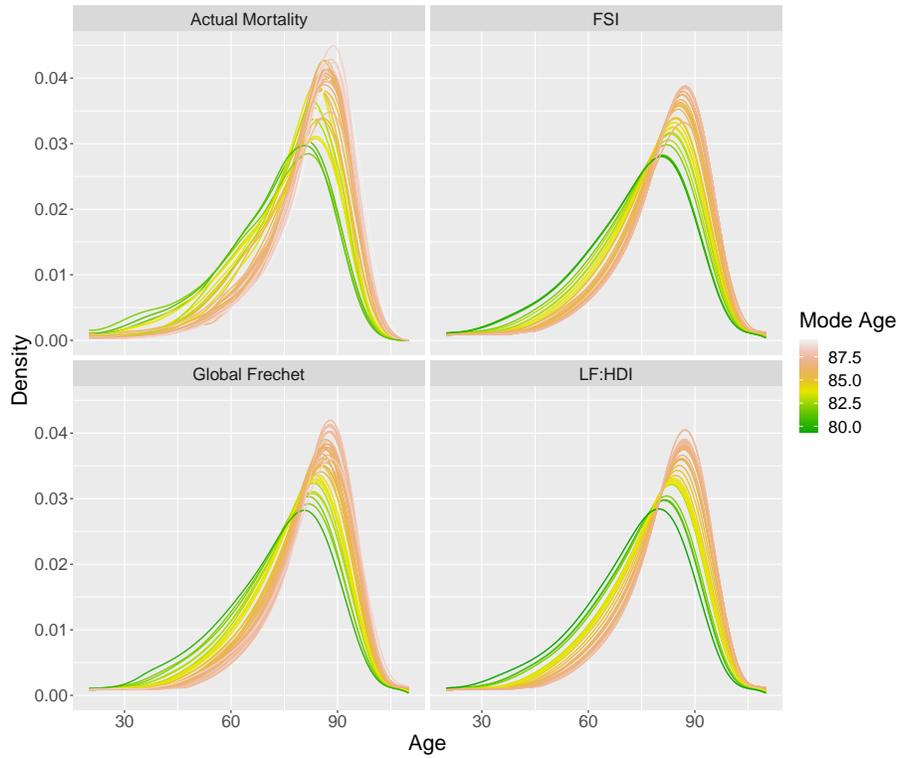}
\caption{\footnotesize Observed smooth densities (top left) along with their fits produced by the proposed FSI model (top right), Global Fréchet model (bottom left), and local Fréchet regression with HDI as predictor (bottom right). Densities are colored by the mode of the age-at-death distribution.}
\label{GF_LF_in_sample_pred.}
\end{figure} 

Table~\ref{tab:mspe_rsq_oplus_compare} gives the computed metrics for all models.  The top three models in terms of Fréchet $R^2$ are the proposed FSI model, the local Fréchet fit using the HDI covariate, and the global Fréchet model.  Figure~\ref{GF_LF_in_sample_pred.} plots the fitted distributions (as densities) for these three models, along with the observed densities.  The plot provides a visual reinforcement of the Fréchet $R^2$ findings as these three models all produce distribution fits that approximate the observed distributions reasonably well.  

Using out-of-sample performance, the FSI model emerges as the best model with the lowest average $\mathrm{MSPE}$ of 4.35.  The left panel of Figure~\ref{fig:boxplots_models_compare} shows boxplots of the 30 different $\mathrm{MSPE}_{k}^{(m)}$ values for each model across splits, reinforcing the metrics in Table~\ref{tab:mspe_rsq_oplus_compare}. In addition to having the smallest median $\mathrm{MSPE}$ value, the dispersion across folds for the FSI is among the lowest, second only to the global Fréchet model.  The global Fréchet model suffers from model-induced bias, while the local Fréchet estimates using HDI lack relevant information from other variables and suffer from poor prediction in certain data splits.  As designed, the FSI model balances the strengths of these two models.  However, these results do not examine the relative performance of these models for each individual split of the data. The right panel of Figure~\ref{fig:boxplots_models_compare} shows the boxplots of the logarithm of the ratio of MSPEs for each of three competing models (global Fréchet and local Fréchet estimates using HDI and HCE, respectively) to the MSPEs of FSI across splits. This comparison shows FSI as the best in overall out-of-sample prediction, as its prediction error is smaller than that of the other top-performing models for the majority of the 30 training/test data splits.

Next, we intepret the coefficient estimate for the FSI model.  Rounded to three digits after the decimal, this was
$$\hat{\theta}=(0.667, \, 0.741, \ -0.067, \, 0.005, \,  0.046)'.$$ 
with the order of standardized covariates being Human Development Index (HDI), Healthcare expenditure as percentage of GDP (HCE), year-on-year percentage change in GDP (GDPC), infant mortality per 1000 live births (IM), carbon dioxide emissions metric tonnes per capita (CO2E).The estimated coefficients for HDI and HCE have the highest magnitudes of 0.667 and 0.741 respectively, indicating their heavy influence relative to the other three predictors on the index $\hat{U}_i = \hat{\theta}'X_i$ that drives the FSI fit, when all variables are in the model. As the FSI fit can be viewed as a local Fréchet estimate based on the univariate predictor $\hat{U}_i$, the superiority of the FSI model to the local Fréchet fit using either the HDI or HCE as predictor indicates that the combined predictive power of HDI and HCE, as quantified by the projection direction $\hat{\theta}$, is stronger than either individual predictor when using local Fréchet regression.  On the other hand, the global Fréchet model also combines the influence of all predictors, but does so less efficiently due to bias in the underlying model.  
\begin{figure}[t!]
\centering
\includegraphics[scale = 0.25]{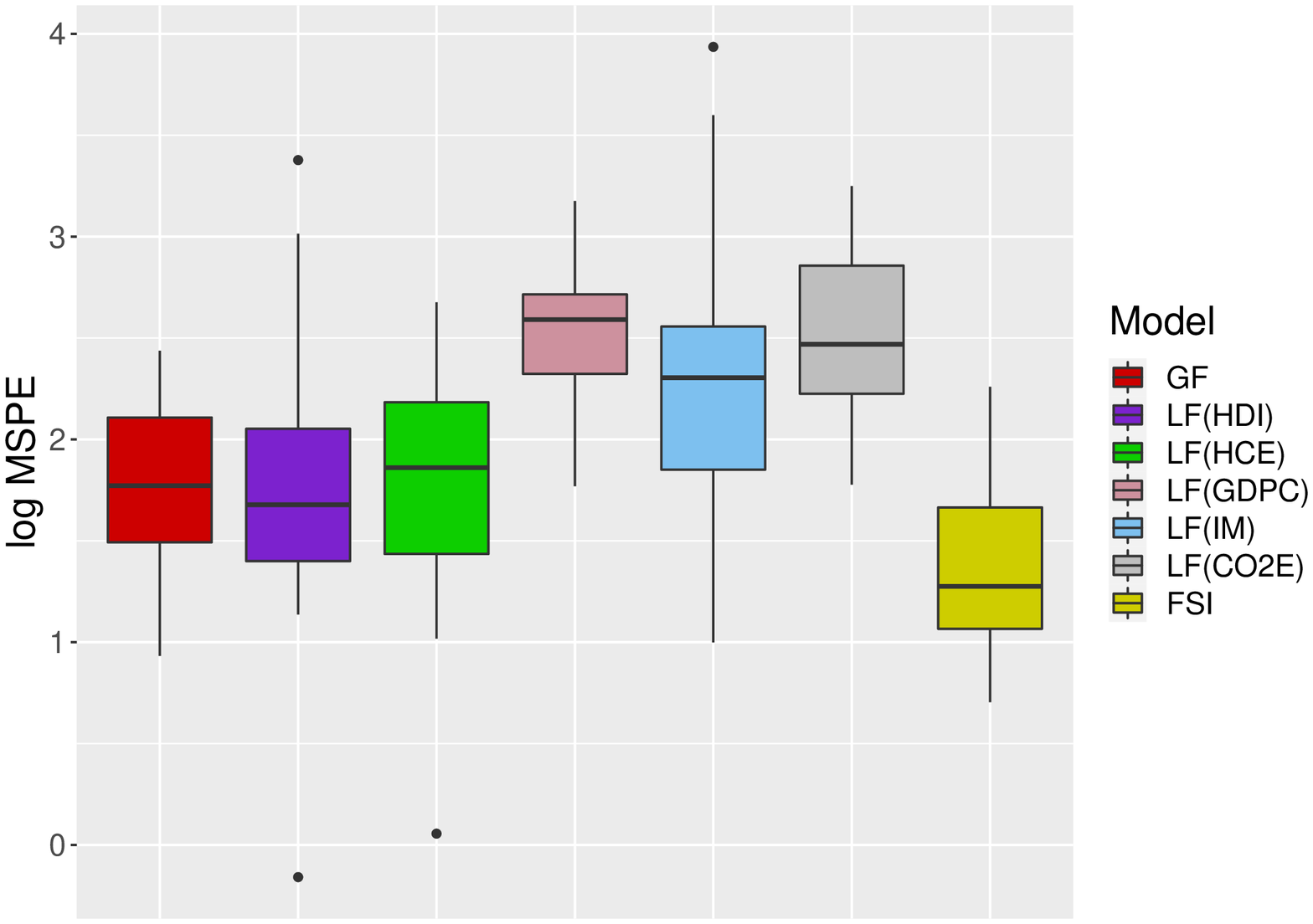}
\includegraphics[scale = 0.25]{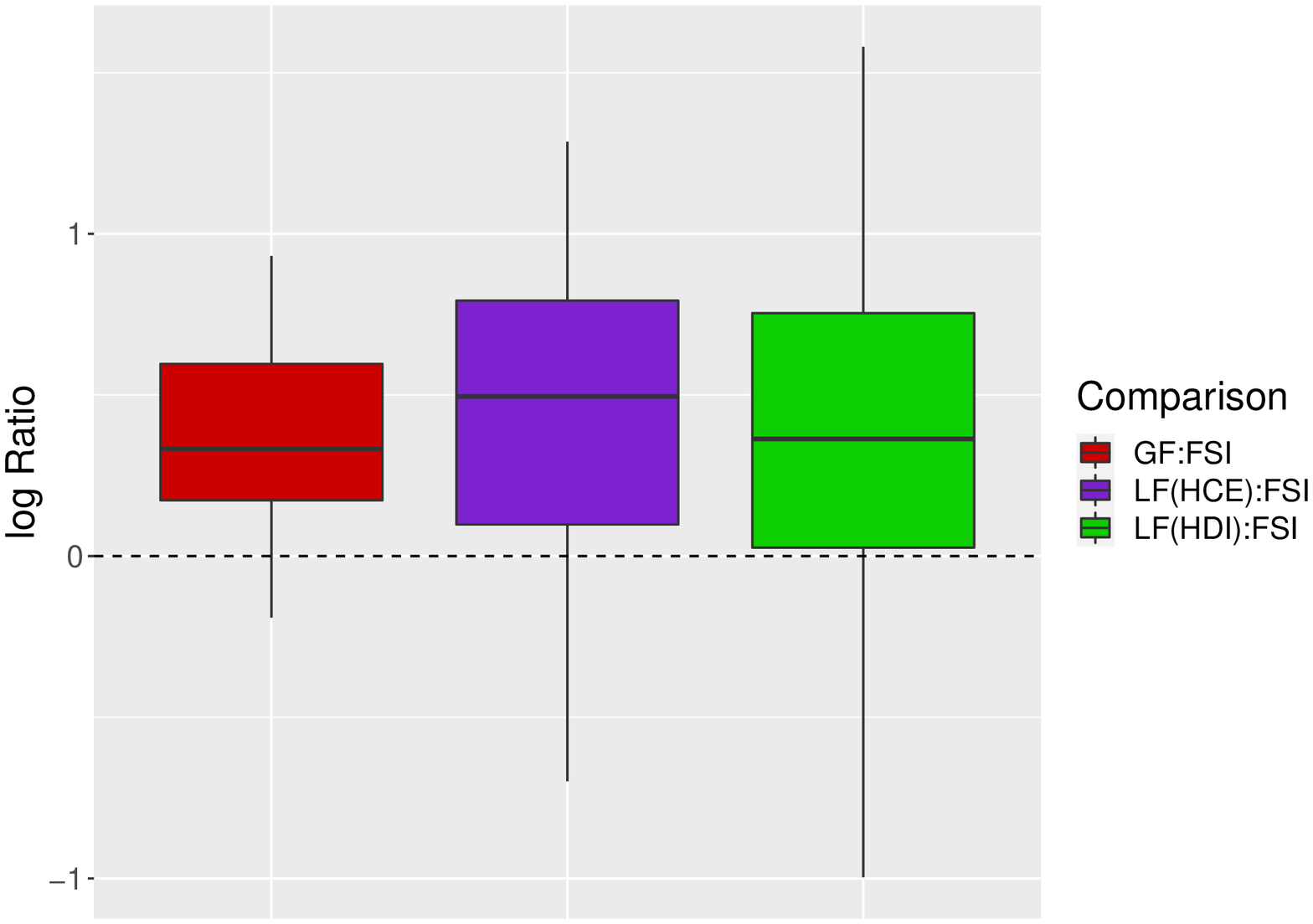}
\caption{\footnotesize Left panel: boxplots of $\text{MSPE}_k^{(m)}$ values from \eqref{eq:fr_mspe} across splits for the following estimates from left to right: Global Fréchet (GF); local Fréchet for each of the predictors Human Development Index (HDI), Healthcare expenditure as \% of GDP (HCE), GDP YoY\% change (GDPC), Infant Mortality per 1000 live births (IM), and $\text{CO}_2$ emissions metric tonnes per capita (CO2E); and Fréchet Single Index (FSI). Right panel: boxplots of log of ratio of the MSPEs from global Fréchet estimates (dark red, left), local Fréchet estimates using HDI (dark purple, midde), and local Fréchet estimates using HCE (green, right) to those of FSI are shown. The MSPE values of each competitor are higher than the FSI values for more than 75\% of the folds, shown by the first quartile of the log-ratios being above the dotted horizontal line.}
\label{fig:boxplots_models_compare}
\end{figure} 

Since HDI and HCE appear to have relatively higher importance as predictors of mortality distributions for the local Fréchet regression as well as for the FSI model in terms of both in-sample and out-of-sample performance, it was interesting to explore how a small change in standardized value of HDI or HCE would affect the mortality distribution prediction of FSI model, while keeping all other covariates fixed at their median values. Figure \ref{fig:HDI_effect} shows the age-at-death distributions predicted by the fitted FSI model. As expected, higher HDI or HCE are associated with increased longevity.
In particular, the plots suggests that the mode of mortality distributions increases for higher values of HDI or HCE, keeping other covariates fixed.

\begin{figure}[ht!]
\centering
\includegraphics[width=5.9cm]{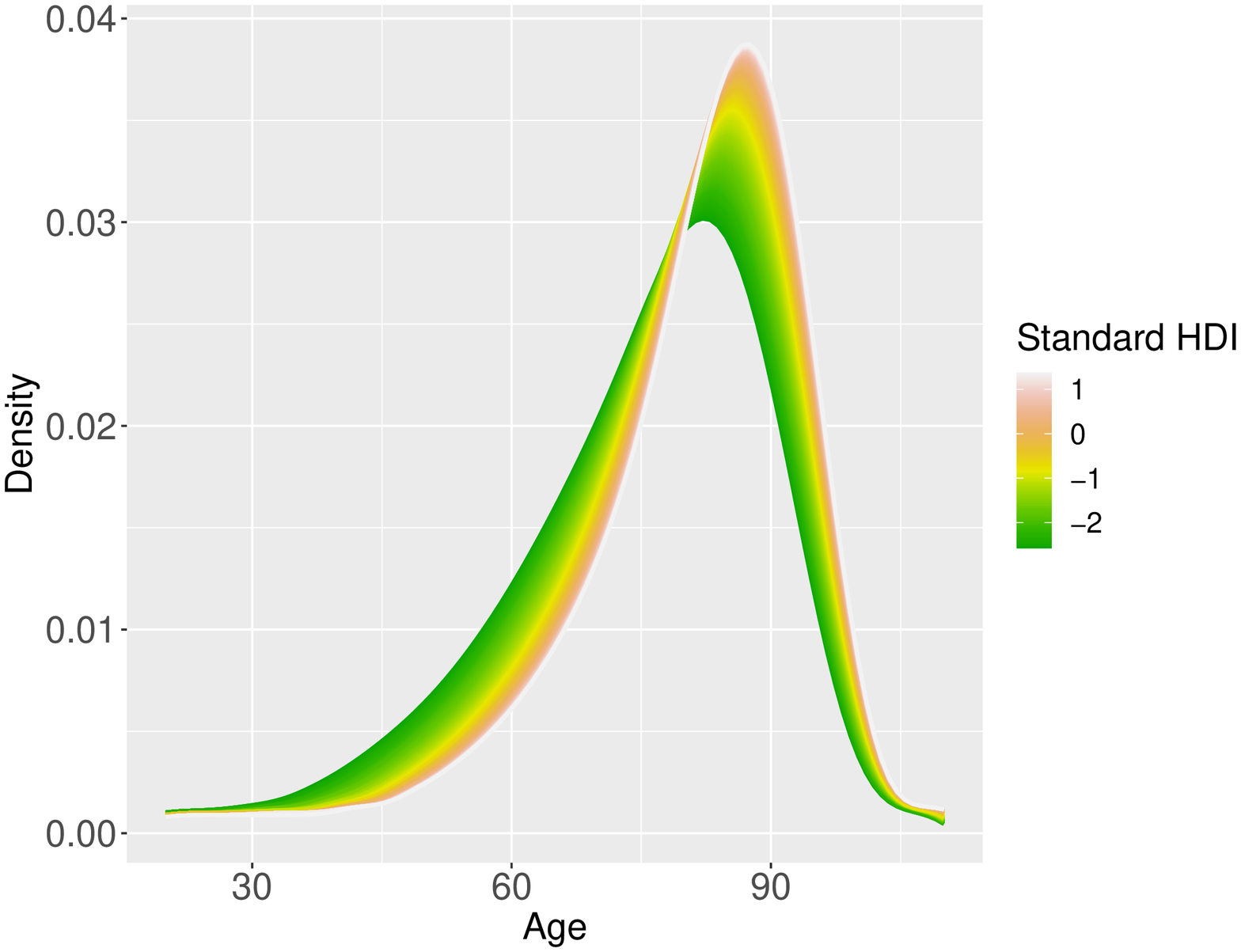}
\includegraphics[width=5.9cm]{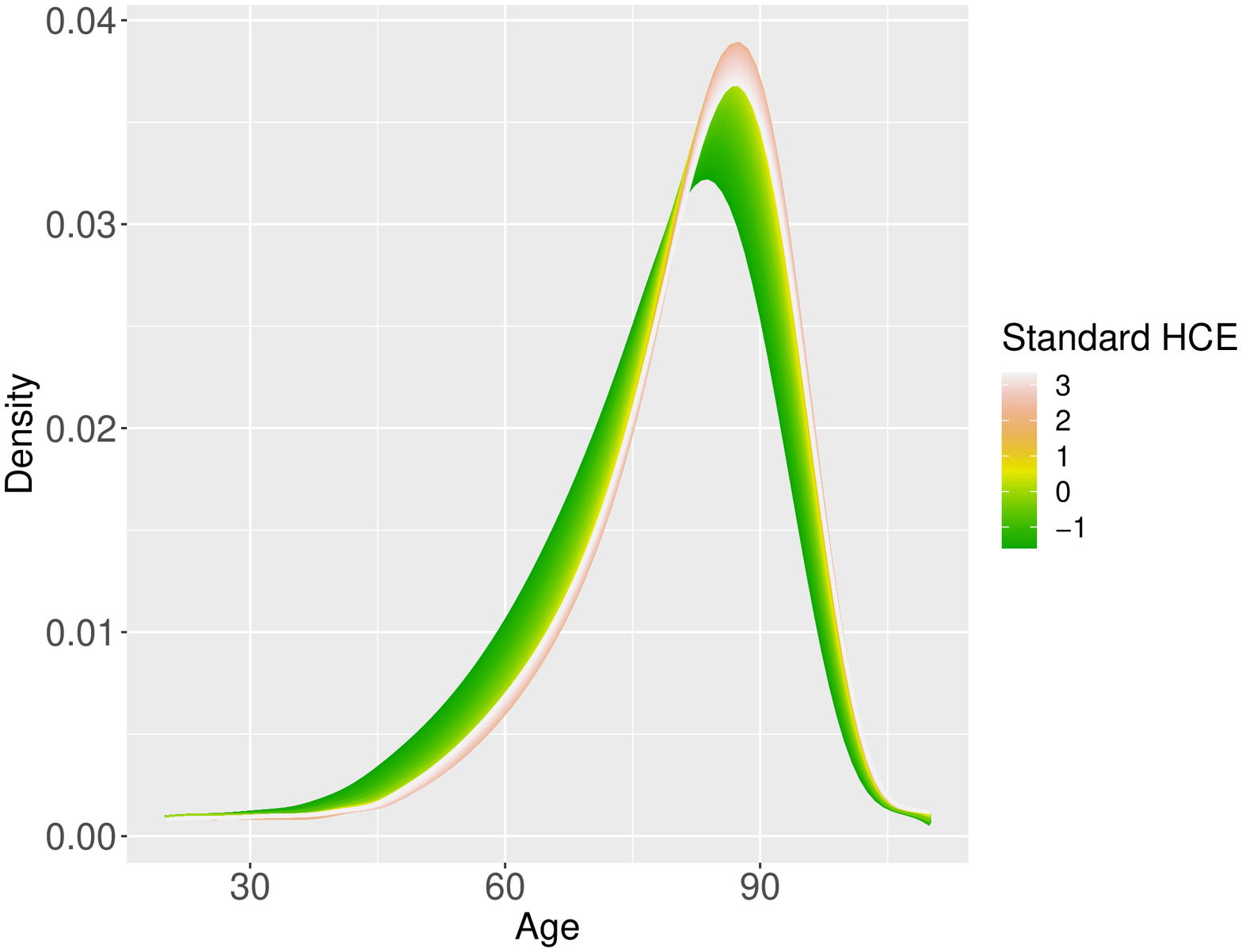}
\caption{\footnotesize Fitted age-at-death densities produced by the FSI model for varying values of HDI (human development index, left) and HCE (health care expenditure, right), with other variables set to their sample median. Colors indicate regularly spaced standardized values of the covariate.}
\label{fig:HDI_effect}
\end{figure}

\section{Discussion}
\label{sec:disc}

The Fréchet single index model developed in this paper offers an alternative to global and local Fréchet regression for random object response data with vector-valued predictors in the spirit of semiparametric regression.  While global Fréchet regression comfortably accommodates multiple predictors, it can be unduly rigid for many complex data settings.  Indeed, even in the special case $\Omega = \R,$ in which global Fréchet is multiple linear regression, such a model is often inadequate, so that its inadequacy in more complex metric spaces $\Omega$ is more likely than not.  Local Fréchet regression, on the other hand, is unattractive when multiple predictors are present on both theoretical and practical grounds, despite its flexibility.  Indeed, the data illustration involving mortality profiles demonstrates that the FSI model outperforms both global Fréchet regression and the best single-predictor model fitted using local Fréchet regression.  Future extensions of the FSI model to handle more complex predictors, such as high-dimensional, functional, or object-valued data, will be valuable assets.

The technical issue surrounding existence and uniqueness of Fréchet means, whether marginal or conditional, has been circumvented in this work by assumption, although specific concrete examples of spaces satisfying the relevant assumption (M) have been provided due to the work of others on this challenging topic.  Nevertheless, as pointed out by reviewers, a particular limitation of the FSI model is its requirement that the conditional Fréchet means $\mp(x)$ in \eqref{eq:fr_regressionfunction} not only exist for each $x$, but that those conditional on $\theta'x$, namely $g_\oplus(\theta'x,\theta)$ in \eqref{eq:gplus}, exist and are unique for every $\theta.$ Examples can be quickly constructed in which the FSI model holds while $g_\oplus(\theta'x,\theta)$ are only unique for $\theta$ equal to or in a neighborhood of $\theta_0$.  It seems plausible that one should still be able to estimate $\theta_0$ in this setting, yet the methods proposed in this paper are inadequate.  It is likely that criterion functions less restrictive than \eqref{eq:W} may provide a path, and we leave this for future work.

While we have used a generalized version of semiparametric least squares for the estimation of the coefficient vector and local Fréchet regression to estimate $g_\oplus$ in \eqref{eq:gplus}, other options are of course available.  For example, projection pursuit \citep{frie:81,hall1989projection}, average derivatives \citep{hardle1989investigating}, the conditional minimum average variance estimation (MAVE) technique \citep{xia2006asymptotic}, and sliced inverse regression \citep{li1991sliced}, among others, have been validated practically and theoretically for scalar responses.  Such approaches could conceivably work for object responses as alternatives to the method presented here for estimating the coefficient in the FSI model.  More broadly, alternative smoothing methods could be developed for the estimation of the link function $\gp,$ although local Fréchet regression and the Nadaraya-Watson estimator \citep{hein2009robust} seem to be the only available options to date for a general metric space.  Other semiparametric approaches for scalar data, such as multiple index models, may well prove to be adaptable to this scenario, although their extensions are less obvious.  

\section{Proofs}
\label{sec:proofs}

This section provides detailed arguments for establishing the main results in Section~\ref{ss:theory}.  First, Section~\ref{ss:aux} contains proofs of Lemmas~\ref{lma:bias}--\ref{lma:BB_approx2}.  These lemmas contain the necessary pieces for establishing Theorem~\ref{thm:unif_cons} and Corollary~\ref{cor:theta_cons}, which are proved in Section~\ref{ss:main_proofs}.

\subsection{Lemmas~\ref{lma:bias}--\ref{lma:unifKernReg} and their proofs}
\label{ss:aux}
Define
\begin{equation}
    \label{eq:mujetc}
    \begin{split}
    \mu_{j,\theta}(x) &= E\left[K_h(\theta'(X_1 - x))(\theta'(X_1 - x))^j\right], \, j = 0,1,2,  \\
    \sigma_{\theta}^2(x) &= \mu_{0,\theta}(x)\mu_{2,\theta}(x) - \mu_{1,\theta}^2(x), \\
    r_h(z, x, \theta) &= \sigma_{\theta}^{-2}(x)K_h(\theta'(z - x))\left[\mu_{2,\theta}(x) - \mu_{1,\theta}(x)(\theta'(z - x))\right], \\ \tilde{\Lambda}_\oplus(\omega, \theta'x, \theta) &= E[r_h(X, x, \theta)d^2(Y,\omega)]. 
    \end{split}
\end{equation}
Lemma~\ref{lma:bias} will establish the uniform rate of convergence of $\tilde{\Lambda}_\oplus$ to the population target as a function of the bandwidth $h$.  Lemma~\ref{lma:BB_approx} will then utilize a multivariate Brownian bridge approximation to establish rates of convergence of kernel estimates for the densities of projections $\theta'X$ that are uniform in both $\theta$ and across the density argument.  In turn, Lemma~\ref{lma:unifKDE} applies the result of Lemma~\ref{lma:BB_approx} to the components $\hmu_{j,\theta}$ that are used in the local Fréchet estimator in \eqref{eq:fsi_weights}.  Finally, Lemmas~\ref{lma:BB_approx2} and \ref{lma:unifKernReg} give the details of a second multivariate Brownian bridge approximation as a generalization of the results in \cite{mack1982weak}.

\begin{lemma}
\label{lma:bias}
Under assumptions (K) and (F),
$$
\sup_{\omega,\theta,x} |\tilde{\Lambda}_\oplus(\omega,\theta'x,\theta) - \Lambda_\oplus(\omega,\theta'x,\theta)| = o_p(1).
$$
\end{lemma}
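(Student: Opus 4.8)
The plan is to recognize $\tLp(\omega,\theta'x,\theta)$ as a \emph{population} local-linear smoother of the conditional-mean function and to bound its bias by a uniform $O(h^2)$. The crucial observation is that the weight $r_h(z,x,\theta)$ in \eqref{eq:mujetc} depends on $z$ only through the scalar projection $\theta'z$; hence $r_h(X,x,\theta)$ is $\sigma(\theta'X)$-measurable, and the law of iterated expectations gives
$$\tLp(\omega,\theta'x,\theta) = E\left[r_h(X,x,\theta)\, E(d^2(Y,\omega)|\theta'X)\right] = E\left[r_h(X,x,\theta)\,\psi_{\omega,\theta}(\theta'X)\right],$$
where $\psi_{\omega,\theta}(u) := \Lp(\omega,u,\theta) = E(d^2(Y,\omega)|\theta'X = u)$. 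Writing $u = \theta'x$, this is exactly the local-linear weighting of $\psi_{\omega,\theta}$ at $u$ against the density $f_{\theta'X}$. A direct computation from the definitions of $\mu_{j,\theta}$ and $\sigma_\theta^2$ confirms the defining local-linear moment identities $E[r_h(X,x,\theta)] = 1$ and $E[r_h(X,x,\theta)\,\theta'(X-x)] = 0$.

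With these identities in hand, I would subtract $\psi_{\omega,\theta}(u)$, Taylor-expand $\psi_{\omega,\theta}(v)$ to second order about $u$, and note that the zeroth- and first-order terms are annihilated by the two moment identities. This leaves
$$\tLp(\omega,\theta'x,\theta) - \Lp(\omega,\theta'x,\theta) = \tfrac{1}{2}\, E\!\left[r_h(X,x,\theta)\,\psi_{\omega,\theta}''(\xi)\,(\theta'(X-x))^2\right]$$
for an intermediate point $\xi$, which I would bound in absolute value by
$$\tfrac{1}{2}\Big(\sup_{\omega,\theta,u}|\psi_{\omega,\theta}''(u)|\Big)\, E\!\left[|r_h(X,x,\theta)|\,(\theta'(X-x))^2\right].$$

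It then remains to establish two uniform bounds. First, the conditional-mean second derivative is uniformly bounded: via Bayes' rule one writes $\psi_{\omega,\theta}(u) = N_{\omega,\theta}(u)/f_{\theta'X}(u)$ with $N_{\omega,\theta}(u) = E[d^2(Y,\omega)\,f_{\theta'X|Y}(u|Y)]$; differentiating the ratio twice and using the boundedness of $\Omega$ together with assumption (F), parts ii) and iii) (the uniform bounds on $|f_{\theta'X}''|$ and $|\partial_u^2 f_{\theta'X|Y}|$, and the lower bound $\inf f_{\theta'X} > 0$) yields $\sup_{\omega,\theta,u}|\psi_{\omega,\theta}''(u)| < \infty$; the first-derivative bounds appearing in the quotient rule follow from the second-derivative bounds on a compact interval. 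Second, the change of variables $v = u + hw$ shows $\mu_{j,\theta}(x) = h^j\int K(w)w^j f_{\theta'X}(u+hw)\,\d w$, whence $\sigma_\theta^2(x) \asymp h^2$ uniformly -- here the lower bound $\inf_{\theta}\inf_u f_{\theta'X}(u) > 0$ together with the strict Cauchy--Schwarz inequality prevents degeneracy -- and the kernel moment conditions in (K) give $E[|r_h(X,x,\theta)|\,(\theta'(X-x))^2] = O(h^2)$ uniformly in $\theta$ and $x$. Combining the two bounds produces a uniform $O(h^2)$ bias, and since $h \to 0$ this is $o(1)$, hence $o_P(1)$ (the difference being deterministic for fixed $h$).

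The main obstacle I anticipate is the uniform control of $\psi_{\omega,\theta}''$ over all $\omega \in \Omega$ and $\theta \in \Theta_p$ simultaneously, which requires pushing assumption (F) through the Bayes representation and carefully handling the quotient rule. A secondary technical point is ensuring the $O(h^2)$ kernel-moment bound holds uniformly up to the boundary of each $\mc{U}_\theta$: because the local-linear weights satisfy the moment identities exactly even for a kernel truncated at an endpoint, the boundary bias remains second order, but verifying $\sigma_\theta^2 \asymp h^2$ uniformly near the endpoints -- where $\mu_{1,\theta}$ is only $O(h)$ rather than $O(h^2)$ -- is precisely where the lower bound on $f_{\theta'X}$ does the essential work.
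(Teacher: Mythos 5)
Your proposal is correct and follows essentially the same route as the paper: both exploit the local-linear moment identities to Taylor-expand $\Lp(\omega,\cdot,\theta)$ inside the weighted expectation, bound the second derivative of the conditional criterion uniformly via the Bayes/density-ratio representation $\d F_{Y|\theta'X}(y|u)/\d F_Y(y) = f_{\theta'X|Y}(u|y)/f_{\theta'X}(u)$ under assumption (F), and control $E\left[|r_h(X,x,\theta)|(\theta'(X-x))^2\right] = O(h^2)$ uniformly in $x$ and $\theta$. The only difference is one of detail: where the paper justifies the density-ratio identity by Tonelli's theorem and outsources the expansion and the kernel-moment bound to \cite{chen2022uniform} and \cite{fan1996local}, you work these out directly (including the quotient-rule and boundary considerations), which is a sound substitute.
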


\begin{proof}
First, we follow the steps of Theorem~3 in \cite{petersen2019frechet} to establish that
\begin{equation}
\label{eq:absCont}
\frac{\d F_{Y|\theta'X}(y|v)}{\d F_Y(u)} = \frac{f_{\theta'X|Y}(v|y)}{f_{\theta'X}(v)}.
\end{equation}
Let $V \subset \Omega$ be any open set.  For $\theta \in \Theta_p$ and $u \in \mc{U}_\theta,$ set
  $$
  a_\theta(u) = \int_V \frac{f_{\theta'X|Y}(u|y)}{f_{\theta'X}(u)} \mathrm{d} F_{Y}(y), \quad b_\theta(u) = \int_V \mathrm{d}F_{Y|\theta'X}(y|u).
  $$
  By Tonelli's theorem, for any $s \in \mathbb{R},$
  \begin{equation*}
    \begin{split}
    \int_{-\infty}^s a_\theta(u)f_{\theta'X}(u) \d u &=  \int_V\left[\int_{-\infty}^s f_{\theta'X|Y}(u|y)\d u\right] \d F_Y(y) \\
    &= \int_{-\infty}^s \left[\int_V \d F_{Y|\theta'X}(y|u)\right]f_{\theta'X}(u)\d u \\
    &= \int_{-\infty}^s b_\theta(u) f_{\theta'X}(u)\d u.
    \end{split} 
  \end{equation*}
Hence, under part iii) of (F), it follows that $a_\theta(u) = b_\theta(u),$ whence \eqref{eq:absCont} follows. Furthermore, let
\begin{equation}
    \label{eq:phiDef}
    \phi(\omega,u,\theta) = \frac{\partial^2}{\partial u^2} \Lambda_\oplus(\omega,u,\theta),
\end{equation}
which is guaranteed to exist for any $\omega \in \Omega,$ $\theta \in \Theta_p$ and $u \in \mc{U}^\mathrm{o}$ by part iii) of (F) and dominated convergence.  Indeed, (F) ensures that
\begin{equation}
    \label{eq:phiBound}
\left|\phi(\omega,u,\theta)\right| = \left|\int_\Omega d^2(y,\omega)\frac{\partial^2}{\partial u^2}\left[\frac{f_{\theta'X|Y}(u|y)}{f_{\theta'X}(u)}\right] \d F_Y(y)\right| < \infty
\end{equation}
uniformly in $\omega,$ $\theta,$ and $u \in \mc{U}_\theta.$

Next, using both (K) and part ii) of (F), we can follow the arguments in the proof of Theorem~1 in \cite{chen2022uniform} to obtain the expansion
\begin{equation}
    \label{eq:tLambdaExp}
\tilde{\Lambda}_\oplus(\omega,\theta'x,\theta) = \Lambda_\oplus(\omega,\theta'x,\theta) + E\left(r_h(X, x, \theta)\phi(\omega,\theta'X^*,\theta)\left[\theta'(X - x)\right]^2\right),
\end{equation}
where $X^*$ is some intermediate value between $X$ and $x$. Using (K), (F), and basic results of \cite{fan1996local},
\begin{equation}
    \label{eq:rhBound}
    E\left(|r_h(X, x, \theta)|\left[\theta'(X - x)\right]^2\right) = O(h^2)
\end{equation}
uniformly in $x$ and $\theta.$ Applying \eqref{eq:phiBound} and \eqref{eq:rhBound} to \eqref{eq:tLambdaExp}, the result follows.
\end{proof}

\begin{lemma}
\label{lma:BB_approx}
Let $\kappa:\R \rightarrow \R$ satisfy the following properties:
\begin{itemize}
    \item $\kappa$ is uniformly continuous and of bounded variation;
    \item $\int_\R |\kappa(w)|\mathrm{d}w < \infty;$
    \item $\kappa(u) \rightarrow 0$ as $|u| \rightarrow \infty;$ and
    \item $\int_\R|u\log|u||^{1/2}\mathrm{d}\kappa(u) < \infty.$
\end{itemize}
For $h > 0$,  $x$ in the support of $F_X,$ and $\theta \in \R^p$ with $\lVert \theta \rVert = 1,$ set
\[    
    \hat{\nu}_\theta(x) = n\inv\son \kappa_h(\theta'(X_i - x)),
\]
where $\kappa_h(\cdot) = h\inv\kappa(\cdot/h).$  Then, on a sufficiently rich probability space, there exist processes $\rho_{n,\theta}(x)$ and $\varepsilon_{n,\theta}(x)$ such that, provided $h \rightarrow 0 $ as $n \rightarrow \infty,$
\begin{equation}
\label{eq:BB_approx}
\begin{split}
    \hat{\nu}_\theta(x) &= E(\hat{\nu}_\theta(x)) + n^{-1/2}\rho_{n,\theta}(x) + n^{-1/2}\varepsilon_{n,\theta}(x), \\
    \sup_{x,\theta} |\rho_{n,\theta}(x)| &= O_P\left\{(-h\inv \log h)^{1/2}\right\}, \, \text{and} \\
    \sup_{x,\theta} |\varepsilon_{n,\theta}(x)| &= O\left(\frac{(\log n)^{3/2}}{hn^{1/[2(p+1)]}}\right) \, \textrm{a.s.} 
\end{split}
\end{equation}

\end{lemma}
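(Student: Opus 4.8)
The plan is to transfer a strong approximation of the multivariate empirical process of $X$ onto the kernel functional $\hat\nu_\theta(x)$ by means of a single integration by parts carried out in the projection direction. Writing $G_{n,\theta}(u) = n\inv\son \mathbf{1}\{\theta'X_i \le u\}$ and $\gamma_n(u,\theta) = n^{1/2}\{G_{n,\theta}(u) - F_{\theta'X}(u)\}$ for the empirical process of the projected sample, one has $\hat\nu_\theta(x) = \int_\R \kappa_h(u - \theta'x)\,\d G_{n,\theta}(u)$, so that, using that $\kappa$ is of bounded variation with $\kappa(u)\to 0$ as $|u|\to\infty$ (whence the boundary terms vanish),
\[
n^{1/2}\{\hat\nu_\theta(x) - E(\hat\nu_\theta(x))\} = \int_\R \kappa_h(u - \theta'x)\,\d\gamma_n(u,\theta) = -h\inv\int_\R \gamma_n(\theta'x + hw,\theta)\,\d\kappa(w).
\]
Thus the entire problem reduces to a Gaussian approximation of $\gamma_n(u,\theta)$ that is uniform jointly in $u$ and $\theta$, i.e.\ of the empirical process indexed by the half-spaces $\{z : \theta'z \le u\}$.

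Second, I would obtain such an approximation through the multivariate Brownian bridge result of \cite{csorgHo1975new}, in the spirit of its use by \cite{rosenblatt1976maximal}. The Lipschitz Rosenblatt transformation $T$ from part i) of (F) maps $X$ to a uniform vector on $[0,1]^p$, so Csörgő's construction supplies, on a sufficiently rich probability space, a sequence of Brownian bridges approximating the orthant-indexed uniform empirical process with a.s.\ error of order $(\log n)^{b} n^{-1/[2(p+1)]}$ for some $b>0$. The half-space functionals defining $\gamma_n(u,\theta)$ are sublevel sets of $u\mapsto \theta'T\inv(u)$, whose boundaries are controlled uniformly in $\theta$ by the Lipschitz bound on $T$; expressing $\gamma_n$ through the orthant process then transports the approximation to a Gaussian process $\tilde B_n(u,\theta)$ with $\sup_{u,\theta}|\gamma_n(u,\theta) - \tilde B_n(u,\theta)| = O((\log n)^{b} n^{-1/[2(p+1)]})$ a.s. Substituting $\tilde B_n$ for $\gamma_n$ in the display above defines $\rho_{n,\theta}(x) = -h\inv\int_\R \tilde B_n(\theta'x + hw,\theta)\,\d\kappa(w)$ and $\varepsilon_{n,\theta}(x)$ as the remainder. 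The bound on $\varepsilon$ is then immediate: since $\int_\R|\d\kappa(w)|$ is the (finite) total variation of $\kappa$, one gets $\sup_{x,\theta}|\varepsilon_{n,\theta}(x)| \le h\inv \sup_{u,\theta}|\gamma_n - \tilde B_n| \cdot \int_\R|\d\kappa| = O((\log n)^{3/2} h\inv n^{-1/[2(p+1)]})$ a.s., the power $3/2$ inheriting the approximation's logarithmic factor, with the moment condition $\int_\R|u\log|u||^{1/2}\,\d\kappa(u)<\infty$ controlling the tail of the integral.

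Third, I would bound $\sup_{x,\theta}|\rho_{n,\theta}(x)|$. As $\tilde B_n(\cdot,\theta)$ has the covariance of a Brownian bridge for the law of $\theta'X$, a direct computation using $\int_\R\d\kappa = 0$ together with the regularity of $f_{\theta'X}$ in part ii) of (F) shows that $\rho_{n,\theta}(x)$ is a centered Gaussian process with pointwise variance of order $h\inv$. A chaining/modulus-of-continuity argument over the joint index $(\theta'x,\theta)$, with $\theta'x$ ranging over a compact interval and $\theta$ over $\Theta_p$, then yields the rate $(-h\inv\log h)^{1/2}$; crucially, the modulus of continuity of the bridge is insensitive to the dimension of the index set, so the extra directions contributed by $\theta$ affect only constants and logarithmic powers and not the leading order.

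The main obstacle is the uniformity over $\theta$. Standard treatments such as \cite{silverman1978weak,mack1982weak} fix the direction and handle only the location $u$, whereas here both the transport of Csörgő's orthant approximation to the half-space functionals $\gamma_n(u,\theta)$ and the maximal inequality for $\rho_{n,\theta}$ must be carried out simultaneously over the sphere $\Theta_p$. The delicate point is that differentiating the kernel scales sensitivities by $h\inv$, so the increments of $\rho_{n,\theta}$ in $\theta$ must be controlled carefully, via the Lipschitz bound on $T$ and the smoothness of $f_{\theta'X}$, to guarantee that the entropy of $\Theta_p$ does not inflate the leading rate $(-h\inv\log h)^{1/2}$.
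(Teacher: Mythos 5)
Your proposal is correct in outline and shares the paper's architecture: both reduce $\hat{\nu}_\theta(x) - E[\hat{\nu}_\theta(x)]$ via integration by parts to the projected empirical process, both invoke the Rosenblatt transformation of (F)(i) together with the Cs\"org\H{o}--R\'ev\'esz (1975) strong approximation on $[0,1]^p$, both define $\rho_{n,\theta}$ as the half-space--indexed functional of the resulting Brownian bridge and $\varepsilon_{n,\theta}$ as the transported approximation error, and both bound $\sup_{x,\theta}|\varepsilon_{n,\theta}(x)|$ by $h\inv$ times the total variation of $\kappa$ times the almost sure rate $(\log n)^{3/2}n^{-1/[2(p+1)]}$ (your transport from rectangle-indexed sets to half-space functionals is asserted at the same level of detail as the paper's, which states this step ``immediately follows''). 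Where you genuinely diverge is the Gaussian term. You treat $\rho_{n,\theta}(x)$ as a Gaussian process indexed jointly by $(\theta'x,\theta)$ and propose chaining over that index, flagging control of the $\theta$-increments as the main obstacle. The paper never chains over $\Theta_p$: after integration by parts (using $\kappa(\pm\infty)=0$, hence $\int_\R \d\kappa = 0$), only increments $B_{n,\theta}(\theta'x+hs)-B_{n,\theta}(\theta'x)$ in the $u$-argument at a \emph{fixed} $\theta$ remain, and the continuity modulus of every projected process $B_{n,\theta}$ is bounded by $\alpha_n(L_T\,\cdot\,)$, the modulus of the single ambient $p$-dimensional bridge $B_n$, with $L_T$ the Lipschitz constant of $T$ --- a bound uniform in $\theta$ by construction. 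Silverman's (1976) Gaussian-process lemma then bounds $\alpha_n$, and the integral argument of Silverman (1978, Proposition 4) --- which is where the condition $\int_\R|u\log|u||^{1/2}\d\kappa(u)<\infty$ is actually consumed; your sketch attaches it to the $\varepsilon$ bound, which needs only bounded variation --- delivers $(-h\inv\log h)^{1/2}$. So the obstacle you identify is real for your route but dissolves in the paper's. Your route is viable: the $\theta$-increments of $\rho$ have standard deviation of order $h\inv\norm{\theta-\theta'}^{1/2}$ (bounding the probability of the symmetric difference of nearby half-spaces under the bounded support and density assumptions), so the entropy integral contributes only $\sqrt{\log(1/h)}$, with the dimension entering through constants; but executing it requires these covariance computations and a fresh chaining argument, whereas the paper's reduction to the ambient modulus obtains uniformity in $\theta$ for free and reuses the one-dimensional machinery of Silverman and Mack--Silverman essentially verbatim.
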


\begin{proof}
  Let $T$ be the Rosenblatt transformation defind in assumption (F) point i), and let $\hat{G}$ be the empirical distribution function of $T_i = T(X_i)$, $i = 1,\ldots,n.$  \cite{csorgHo1975new} proved the existence of a sequence of Brownian bridges $B_n$ (with continuous sample paths) on the $p$-dimensional unit cube such that
  \begin{equation}
      \label{eq:csorgho}
      \sup_{t \in [0,1]^p} |n^{1/2}(\hat{G}(t) - t) - B_n(t)| = O\left(\frac{(\log n)^{3/2}}{n^{1/[2(p+1)]}}\right) \, \text{a.s.} 
  \end{equation}
  Define $Z_n(x) = n^{1/2}(\hat{F}_X(x) - F(x)) - B_n(T(x))$, where $\hat{F}_X$ is the empirical distribution function of the $X_i$, and let $F_{\theta'X}$ and $\hat{F}_{\theta'X}$ be, respectively, the population and empirical distribution functions of $\theta'X_i,$ so that
\begin{equation*}
  \begin{split}
  \hat{F}_{\theta'X}(u)  - F_{\theta'X}(u) &= \int_{\mathbb{R}^p} \mathbf{1}(\theta'z \leq u) \d\left[\hat{F}_X(z) - F_X(z)\right] \\
  &= n^{-1/2} \int_{\mathbb{R}^p} \mathbf{1}(\theta'z \leq u) \d B_n(T(z)) \\
  &\hspace{1cm} + n^{-1/2} \int_{\mathbb{R}^p} \mathbf{1}(\theta'z \leq u) \d Z_n(z) \\
  &= n^{-1/2}B_{n,\theta}(u) + n^{-1/2}Z_{n,\theta}(u).
  \end{split}
 \end{equation*}
  Hence,
  \begin{equation}
      \label{eq:nu_exp}
      \begin{split}
        \hat{\nu}_\theta(x) &= E(\hat{\nu}_\theta(x)) + \int_{\mathbb{R}} \kappa_h(u - \theta'x)) \mathrm{d}(\hat{F}_{\theta'X} - F_{\theta'X})(u) \\
        &= E(\hat{\nu}_\theta(x)) + n^{-1/2}\rho_{n,\theta}(x) + n^{-1/2}\varepsilon_{n,\theta}(x),
      \end{split}
  \end{equation}
  where $\rho_{n,\theta}(x) = \int_\mathbb{R} \kappa_h(u-\theta'x) \mathrm{d}B_{n,\theta}(u)$ and $\varepsilon_{n,\theta}(x) = \int_\mathbb{R} \kappa_h(u - \theta'x) \mathrm{d}Z_{n,\theta}(u).$  Thus, we have established the first line of \eqref{eq:BB_approx}.  Due to conditions on the kernel and \eqref{eq:csorgho}, the third line of \eqref{eq:BB_approx} immediately follows.
  
  To establish the second line of \eqref{eq:BB_approx}, use integration by parts, the change of variable $s = (u - \theta'x)/h$, and the assumption that $\kappa(u) \rightarrow 0$ as $|u| \rightarrow 0$ to conclude that
\begin{equation}
    \label{eq:rhoBound1}
    \begin{split}
        |\rho_{n,\theta}(x)| &= \left|\int_{\mathbb{R}} \kappa_h(u - \theta'x) \d B_{n,\theta}(u)\right| \\
        &\leq h^{-1}\int_{\mathbb{R}} |B_{n,\theta}(\theta'x + hs)||\d \kappa(s)| \\
        &\leq h^{-1}\int_\mathbb{R} |B_{n,\theta}(\theta'x + hs) - B_{n,\theta}(\theta'x)||\d \kappa(s)|.
    \end{split}
\end{equation}
  To control the integrand, let $\alpha_{n,\theta}$ be the continuity modulus of 
  $$
  B_{n,\theta}(u) = \int_{\mathbb{R}^p} \mathbf{1}(\theta'z \leq u) \d B_n(T(z)).
  $$
  With $\alpha_n$ being the continuity modulus of $B_n$, it follows that $\alpha_{n,\theta}(\epsilon) \leq \alpha_n(L_T\epsilon),$ where $L_T$ is the Lipschitz constant of $T$ from (F). Using standard arguments, one can show that
  $$
  \sup_{\lVert s - t\rVert \leq \epsilon p^{1/2}} \left[E\left\{(B_n(s) - B_n(t))^2\right\}\right]^{1/2} \leq q(\epsilon ) = \begin{cases} \sqrt{p\epsilon(1 - \epsilon)}, & 0 < \epsilon < 1/2 \\ \sqrt{p}/2, & \epsilon \geq 1/2. \end{cases}
  $$
  Then Lemma 2 of \cite{silverman1976gaussian} implies the existence of a random variable $A > 0$ with $E(A) < \infty$ such that
  \begin{equation}
  \label{eq:alphaBound}
  \alpha_n(\epsilon) \leq 16(p\log A)^{1/2}q(\epsilon) + 16p\sqrt{2}\int_0^\epsilon \left\{\log(1/r)\right\}^{1/2}\d q(r).
  \end{equation}
  Applying this bound to the integrand in \eqref{eq:rhoBound1}, we find that
  $$
  \sup_{x,\theta} |\rho_{n,\theta}(x)| \leq h\inv\int_{\mathbb{R}} \alpha_n(L_t|s|h) |\d \kappa(s)|.
  $$
  Hence, applying the exact arguments in Proposition~4 of \cite{silverman1978weak} under (K), we can conclude line 2 of \eqref{eq:BB_approx}.
  
\end{proof}

\begin{lemma}
\label{lma:unifKDE}
Under assumptions (K) and (F), 
$$
\sup_{\omega,x,\theta}\left|\frac{1}{n} \sum_{i = 1}^n\{ \hat{r}_h(X_i, x, \theta) - r_h(X_i, x, \theta)\}d^2(Y_i,\omega)\right| = O_P(b_n),
$$
where
$$
b_n = \max\left\{\left(\frac{-\log h}{nh}\right)^{1/2}, \frac{[\log n]^{3/2}}{hn^{(p + 2)/[2(p+1)]}}\right\}.
$$
\end{lemma}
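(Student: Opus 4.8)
The plan is to reduce the claim to uniform control of the kernel estimators $\hmu_{j,\theta}(x)$, $j = 0,1,2$, and then propagate these rates through the explicit ratio defining the weights. First I would observe that, for each $j$, the rescaled quantity $h^{-j}\hmu_{j,\theta}(x) = n\inv\son \kappa_h(\theta'(X_i - x))$ is exactly the kernel estimator $\hat{\nu}_\theta(x)$ of Lemma~\ref{lma:BB_approx} built from the kernel $\kappa = K^{(j)}$, where $K^{(j)}(w) = K(w)w^j$ as in assumption (K). The moment and regularity requirements placed on $K^{(j)}$ by (K) are precisely those needed for $\kappa = K^{(j)}$ to satisfy the hypotheses of Lemma~\ref{lma:BB_approx}. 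Since $E(h^{-j}\hmu_{j,\theta}(x)) = h^{-j}\mu_{j,\theta}(x)$, applying the lemma for $j = 0,1,2$ gives
\[
\sup_{x,\theta}\left|h^{-j}\left(\hmu_{j,\theta}(x) - \mu_{j,\theta}(x)\right)\right| = O_P(b_n),
\]
where the two rate contributions combine as required: $n^{-1/2}(-h\inv\log h)^{1/2} = (-\log h/(nh))^{1/2}$, and $n^{-1/2}(\log n)^{3/2}/(hn^{1/[2(p+1)]}) = (\log n)^{3/2}/(hn^{(p+2)/[2(p+1)]})$, which are exactly the two terms in the maximum defining $b_n$.

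Next I would record the deterministic orders of the population quantities. A standard change of variables together with (K) shows that $h^{-j}\mu_{j,\theta}(x) \to f_{\theta'X}(\theta'x)\int w^j K(w)\,\d w$ uniformly, so $\sup_{x,\theta}|h^{-j}\mu_{j,\theta}(x)| = O(1)$ and $h^{-2}\sigma_\theta^2(x) \to f_{\theta'X}(\theta'x)^2\int w^2 K(w)\,\d w$. By part ii) of (F), $f_{\theta'X}$ is bounded below uniformly over $\theta$ and $\mc{U}_\theta$, so $\inf_{x,\theta} h^{-2}\sigma_\theta^2(x)$ is bounded away from $0$ for $h$ small. Combining the expansion of $\hsigma_\theta^2(x) - \sigma_\theta^2(x)$ in terms of the differences $\hmu_{j,\theta} - \mu_{j,\theta}$ with the first step yields $\sup_{x,\theta}|h^{-2}(\hsigma_\theta^2(x) - \sigma_\theta^2(x))| = O_P(b_n)$; hence $\hsigma_\theta^2(x)$ is also of exact order $h^2$ uniformly, with probability tending to one, and $\sup_{x,\theta}\hsigma_\theta^{-2}(x) = O_P(h^{-2})$.

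With these rates in hand, the final step is bookkeeping. Because $\Omega$ is bounded under (F), $d^2(Y_i,\omega) \le \mathrm{diam}(\Omega)^2$ uniformly, so the supremum over $\omega$ is harmless: I may replace $d^2(Y_i,\omega)$ by this constant wherever convenient, which removes $\omega$ and leaves only uniformity in $x$ and $\theta$. Writing $N_i = \mu_{2,\theta}(x) - \mu_{1,\theta}(x)(\theta'(X_i - x))$ and $\hat N_i$ for its empirical analogue, the identity
\[
\frac{\hat N_i}{\hsigma_\theta^2(x)} - \frac{N_i}{\sigma_\theta^2(x)} = \frac{\hat N_i - N_i}{\hsigma_\theta^2(x)} - \frac{N_i(\hsigma_\theta^2(x) - \sigma_\theta^2(x))}{\hsigma_\theta^2(x)\sigma_\theta^2(x)}
\]
splits $\frac1n\son\{\hat{r}_h(X_i,x,\theta) - r_h(X_i,x,\theta)\}d^2(Y_i,\omega)$ into a numerator-perturbation term and a denominator-perturbation term. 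In the first, $\hat N_i - N_i = (\hmu_{2,\theta} - \mu_{2,\theta}) - (\hmu_{1,\theta} - \mu_{1,\theta})(\theta'(X_i - x))$ produces, after summing $K_h(\theta'(X_i-x))(\cdot)$ against the bounded $d^2(Y_i,\omega)$, a factor of order $O_P(h^2 b_n)$ (the relevant kernel averages being of orders $O_P(1)$ and $O_P(h)$), and the prefactor $\hsigma_\theta^{-2}(x) = O_P(h^{-2})$ cancels this to give $O_P(b_n)$. In the second, $\frac1n\son K_h(\theta'(X_i-x))N_i d^2(Y_i,\omega) = O_P(h^2)$ while $(\hsigma_\theta^2(x) - \sigma_\theta^2(x))/(\hsigma_\theta^2(x)\sigma_\theta^2(x)) = O_P(h^2 b_n / h^4) = O_P(b_n/h^2)$, again giving $O_P(b_n)$. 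Taking the supremum over $x$, $\theta$, and $\omega$ yields the claim.

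The main obstacle is the first step: obtaining the rate for $\hmu_{j,\theta}(x) - \mu_{j,\theta}(x)$ \emph{uniformly in the projection direction $\theta$ as well as in the argument $x$}. Uniform-in-$x$ rates for a fixed $\theta$ are classical, but the joint uniformity is exactly what the multivariate Brownian bridge approximation of Lemma~\ref{lma:BB_approx} is designed to supply, and recognizing $h^{-j}\hmu_{j,\theta}$ as an instance of $\hat\nu_\theta$ for the kernel $K^{(j)}$ is what makes that lemma applicable here. Once that identification is made, the remaining work is the routine propagation of rates through the ratio, with the only point requiring care being that $\sigma_\theta^2(x)$ remains of exact order $h^2$ uniformly over $\theta$, which is guaranteed by the uniform lower bound on $f_{\theta'X}$ in (F).
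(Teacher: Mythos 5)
Your proposal is correct and takes essentially the same route as the paper's proof: both reduce the claim to Lemma~\ref{lma:BB_approx} applied to the kernels $K^{(j)}(w)=K(w)w^j$, $j=0,1,2$, to get rates for $\hmu_{j,\theta}$ uniform in $x$ and $\theta$, bound $d^2(Y_i,\omega)\leq \mathrm{diam}^2(\Omega)$ to dispose of the supremum over $\omega$, and propagate these rates through the ratio defining $\hat{r}_h$. The only cosmetic differences are the exact algebraic split of $\hat{r}_h - r_h$ (the paper keeps the empirical numerator with the perturbed prefactor, you keep the population one) and that the paper justifies your asserted $O_P(h)$ bound on the absolute kernel average $n^{-1}\son K_h(\theta'(X_i-x))\left|\theta'(X_i-x)\right|$ by one further application of Lemma~\ref{lma:BB_approx} with $\kappa(u)=K(u)|u|$, a bound that also follows from Cauchy--Schwarz and your controls on $\hmu_{0,\theta}$ and $\hmu_{2,\theta}$.
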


\begin{proof}
  By applying Lemma~\ref{lma:BB_approx} to the kernels $\kappa(u) = K(u)u^j,$ $j = 0,1,2,$ we can conclude that
  $\hat{\mu}_{j,\theta}(x) = \mu_{j,\theta}(x) + h^jO_P(b_n)$ uniformly in $x$ and $\theta.$  Using the fact that $\mu_{j,\theta}(x) = h^j f_{\theta'X}(\theta'x)(K_{1j} + o(1))$ by (K) and (F), simple calculations show that $\hat{\sigma}^{-2}_\theta(x) = \sigma^{-2}_\theta(x) + O_P(b_nh^{-2})$ uniformly in $x$ and $\theta.$  Consequently, by assumption (F),
  \begin{align*}
  \hat{r}_h(z, x, \theta) - r_h(z, x, \theta) &= O_P(b_nh^{-2})K_h\left[\theta'(z - x)\right]\left\{\hmu_{2,\theta}(x) - \hmu_{1,\theta}(x)\theta'(z-x)\right\} \\
  & \hspace{-0.2cm} + O(h^{-2})K_h\left[\theta'(z-x)\right]\left\{O_P(b_nh^2) - O_P(hb_n)\theta'(z-x)\right\},
  \end{align*}
  where all $O(\cdot)$ and $O_P(\cdot)$ terms are uniform in $\theta$ and $x.$
  
  Finally, applying Lemma~\ref{lma:BB_approx} to the kernel $\kappa(u) = K(u)|u|$, similar analysis shows that
  $$
  \hmu_{j,\theta}^+(x) = n\inv\sum_{i = 1}^n K_h(\theta'(X_i - x))|\theta'(X_i - x)|^j
  $$
  satisfies $\sup_{x,\theta} |\hmu_{j,\theta}^+(x)| =  O_P(h^j).$ Thus,
   \begin{align*}
      \sup_{\omega,x,\theta} &\left|\frac{1}{n}\sum_{i = 1}^n \left\{\hat{r}_h(X_i, x, \theta) - r_h(X_i, x, \theta)\right\}d^2(Y_i,\omega)\right| \\
      &\hspace{1cm} \leq \mathrm{diam}^2(\Omega)\sup_{x,\theta} \left[O_P(b_nh^{-2})\{\hmu_{0,\theta}^+(x)\hmu_{2,\theta}(x) - \hmu_{1,\theta}(x)\hmu_{1,\theta}^+(x)\} \right . \\
      &\hspace{3cm} + \left. O(h^{-2})\{\hmu_{0,\theta}^+(x)O_P(b_nh^2) + \hmu_{1,\theta}^+(x)O_P(b_nh)\} \right] \\
      &\hspace{1cm}= O_P(b_n).
  \end{align*}
  
\end{proof}

\begin{lemma}
\label{lma:BB_approx2}
Under assumptions (K) and (F), and with $\kappa$ satisfying the conditions in Lemma~\ref{lma:BB_approx}, for any fixed $\omega \in \Omega,$ set 
$$
\hat{\nu}_\theta^+(x) = n^{-1}\sum_{i = 1}^n \kappa_h(\theta'(X_i - x))d^2(Y_i, \omega), \quad \kappa_h(\cdot) = h\inv\kappa(\cdot/h).
$$
Then, on a sufficiently rich probability spaces, there exist processes $\rho_{n,\theta}^+(x)$ and $\varepsilon_{n,\theta}^+(u)$ such that, provided $h \rightarrow 0$ as $n \rightarrow \infty,$
\begin{equation}
\label{eq:BB_approx2}
\begin{split}
    \hat{\nu}_\theta^+(x) &= E(\hat{\nu}_\theta^+(x)) + n^{-1/2}\rho_{n,\theta}^+(x) + n^{-1/2}\varepsilon_{n,\theta}^+(x), \\
    \sup_{x,\theta} |\rho_{n,\theta}^+(x)| &= O_P\left\{(-h\inv \log h)^{1/2}\right\}, \, \text{and} \\
    \sup_{x,\theta} |\varepsilon_{n,\theta}^+(x)| &= O\left(\frac{(\log n)^{3/2}}{hn^{1/[2(p+2)]}}\right) \, \text{a.s.} 
\end{split}
\end{equation}
\end{lemma}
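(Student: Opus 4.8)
The plan is to recognize Lemma~\ref{lma:BB_approx2} as the analog of Lemma~\ref{lma:BB_approx} for the $(p+1)$-dimensional sample obtained by augmenting each predictor $X_i$ with the scalar $R_i = d^2(Y_i,\omega)$ for the fixed $\omega$, and to replay the same Brownian bridge decomposition using the Rosenblatt transformation $T^+$ of $(X,R)$ from assumption (F) iv) in place of $T$. Note first that boundedness of $\Omega$ in (F) i) gives $R_i \le \mathrm{diam}^2(\Omega)$, so the weight $r$ is bounded, while (F) iv) supplies a density for $(X,R)$ with Lipschitz $T^+$. I would begin by writing $\hat\nu_\theta^+(x)$ as a one-dimensional kernel integral against a weighted projected distribution function: setting $\hat{H}_\theta(u) = n\inv\son \mathbf{1}(\theta'X_i \le u)R_i$ and $H_\theta(u) = E[\mathbf{1}(\theta'X \le u)R]$, one has $\hat\nu_\theta^+(x) = \int_\R \kappa_h(u - \theta'x)\,\d\hat{H}_\theta(u)$ and $E(\hat\nu_\theta^+(x)) = \int_\R \kappa_h(u - \theta'x)\,\d H_\theta(u)$. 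Since $\hat{H}_\theta(u) - H_\theta(u) = \int_{\R^{p+1}} \mathbf{1}(\theta'z \le u)\,r\,\d(\hat{F}_{X,R} - F_{X,R})(z,r)$, the problem reduces to approximating the joint empirical distribution $\hat{F}_{X,R}$ of $(X_i,R_i)$.

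Next I would invoke the approximation \eqref{eq:csorgho} of \cite{csorgHo1975new}, now applied in dimension $p+1$ through $T^+$, producing a Brownian bridge $B_n^+$ on $[0,1]^{p+1}$ with empirical distribution $\hat{G}^+$ of the $T^+(X_i,R_i)$ satisfying
$$\sup_{t \in [0,1]^{p+1}}|n^{1/2}(\hat{G}^+(t) - t) - B_n^+(t)| = O\left(\frac{(\log n)^{3/2}}{n^{1/[2(p+2)]}}\right) \, \textrm{a.s.},$$
together with a residual $Z_n^+(z,r) = n^{1/2}(\hat{F}_{X,R} - F_{X,R})(z,r) - B_n^+(T^+(z,r))$. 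Substituting this decomposition into the display for $\hat{H}_\theta - H_\theta$ and integrating $\kappa_h$ against it yields the identity in the first line of \eqref{eq:BB_approx2}, with $\rho_{n,\theta}^+(x) = \int_\R \kappa_h(u - \theta'x)\,\d B_{n,\theta}^+(u)$ and $\varepsilon_{n,\theta}^+(x) = \int_\R \kappa_h(u - \theta'x)\,\d Z_{n,\theta}^+(u)$, where $B_{n,\theta}^+(u) = \int \mathbf{1}(\theta'z \le u)\,r\,\d B_n^+(T^+(z,r))$ and $Z_{n,\theta}^+$ is defined analogously. The $\varepsilon$ bound in the third line of \eqref{eq:BB_approx2} then follows exactly as in Lemma~\ref{lma:BB_approx}, by integration by parts together with the conditions on $\kappa$ in (K): the dimension $p+1$ supplies the rate $n^{-1/[2(p+2)]}$, the factor $h\inv$ comes from the kernel scaling, and the boundedness and moment control of $r$ in (F) iv) ensure the $r$-weighting contributes only a finite constant.

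For the $\rho$ bound in the second line I would mimic \eqref{eq:rhoBound1}: integration by parts and the change of variables $s = (u - \theta'x)/h$ give $|\rho_{n,\theta}^+(x)| \le h\inv\int_\R |B_{n,\theta}^+(\theta'x + hs) - B_{n,\theta}^+(\theta'x)|\,|\d\kappa(s)|$, so everything reduces to bounding the continuity modulus of the weighted projected process $B_{n,\theta}^+$ uniformly in $\theta$. As before, the Lipschitz property of $T^+$ dominates this modulus by $\alpha_n(L_{T^+}\epsilon)$, where $\alpha_n$ is the continuity modulus of $B_n^+$, which is controlled by the Silverman bound \eqref{eq:alphaBound}; applying the arguments of Proposition~4 of \cite{silverman1978weak} under (K) then yields the stated rate $O_P\{(-h\inv\log h)^{1/2}\}$, which is insensitive to dimension.

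The main obstacle I anticipate is precisely this last step. Unlike the unweighted process in Lemma~\ref{lma:BB_approx}, the increments of $B_{n,\theta}^+$ carry the factor $r$, so the variance of $B_{n,\theta}^+(u+\delta) - B_{n,\theta}^+(u)$ is a second moment of $r$ integrated over the thin slab $\{u < \theta'z \le u + \delta\}$ (extending over the full range of the $r$-coordinate). Establishing that this variance is bounded by a constant multiple of the slab width, uniformly in $\theta$ and $u$, so that the Gaussian continuity-modulus machinery of \cite{silverman1976gaussian} applies verbatim, is the crux; this is exactly where the moment condition $\sup_x \int_\R r^2 f_{X,R}(x,r)\,\d r < \infty$ and the boundedness of $\Omega$ in assumption (F) are essential.
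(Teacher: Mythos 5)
Your proposal is correct and follows essentially the same route as the paper's proof: the same $(p+1)$-dimensional approximation of \cite{csorgHo1975new} applied through $T^+$, the same definitions of $\rho_{n,\theta}^+$ and $\varepsilon_{n,\theta}^+$, the same argument for the $\varepsilon$-bound, and the same reduction of the $\rho$-bound to the continuity modulus of the $r$-weighted projected bridge. The crux you flag at the end is resolved in the paper exactly as you anticipate: it computes the covariance of $\tilde B_n^+(z)=\int_{\mathbb{R}} r\,\mathrm{d}_r B_n^+(T^+(z,r))$ in terms of $\gamma_j(z)=\int_{\mathbb{R}} r^j\,\mathrm{d}_r F_{X,R}(z,r)$, bounds the mean-square increment by $M\lVert z-z'\rVert$ with $M=\sup_z\int_{\mathbb{R}} r^2 f_{X,R}(z,r)\,\mathrm{d}r$ from (F)\,iv), and then applies Lemma 2 of \cite{silverman1976gaussian} with $q^+(\epsilon)=\epsilon^{1/2}$, yielding a new modulus bound \eqref{eq:alphaBound2} rather than reusing \eqref{eq:alphaBound}, which pertains only to the unweighted bridge.
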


\begin{proof}
  The proof follows the same line as that of Lemma~\ref{lma:BB_approx}, with some adjustments to deal with the presence of the response variable.  For any fixed $\omega,$ write $R = d^2(Y,\omega)$ and $R_i = d^2(Y_i,\omega).$  Furthermore, let $F_{X,R}$ and $\hat{F}_{X,R}$ be the population and empirical cumulative distribution functions of $(X_i,R_i)$, and similarly define $F_{\theta'X,R}$ and $\hat{F}_{\theta'X,R}$ for the random pairs $(\theta'X_i, R_i).$  Applying the result of \cite{csorgHo1975new} to the vectors $(X_i,R_i) \in \mathbb{R}^{p+1},$ and letting $T^+$ be the Rosenblatt transformation of $(X,R),$ there is a $(p + 1)$-dimensional Brownian bridge $B_n^+$ such that, with $\hat{G}^+$ denoting the empirical cumulative distribution function of $T^+(X_i,R_i),$ 
  \begin{equation}
      \label{eq:csorgo2}
      \sup_{t \in [0,1]^{p+1}}|\sqrt{n}(\hat{G}^+(t) - t) - B_n^+(t)| = O\left(\frac{[\log n]^{3/2}}{n^{1/[2(p+2)]}}\right) a.s.
  \end{equation} 
  
  Continuing, set $Z_n^+(x,r) = n^{1/2}(\hat{F}_{X,R}(x,r) - F_{X,R}(x,r)) - B_n^+(T^+(x,r))$, so that, for $u \in \mc{U}_\theta,$
  \begin{equation*}
  \begin{split}
  \hat{F}_{\theta'X,R}(u,r) - F_{\theta'X,R}(u,r) &= \int_{\mathbb{R}^p} \mathbf{1}(\theta'z \leq u) \d_z \left[\hat{F}_{X,R}(z,r) - F_{X,R}(z,r)\right] \\
  &=n^{-1/2}\int_{\mathbb{R}^p} \mathbf{1}(\theta'z \leq u) \d_z B_n^+(T^+(z,r)) \\
  &\hspace{1.5cm} + n^{-1/2}\int_{\mathbb{R}^p} \mathbf{1}(\theta'z \leq u) \d_z Z_n^+(z,r) \\
  &= n^{-1/2}B_{n,\theta}^*(u,r) + n^{-1/2}Z_{n,\theta}^*(u,r).
  \end{split}
  \end{equation*}
  Hence,
  \begin{equation}
      \label{eq:nu_exp2}
  \begin{split}
  \hat{\nu}_{\theta}^+(x) &= E(\hat{\nu}_\theta^+(x)) + \int_{\mathbb{R}^2} \kappa_h(u - \theta'x)r \d[\hat{F}_{\theta'X,R}(u,r) - F_{\theta'X,R}(u,r)] \\
  &= E(\hat{\nu}_\theta^+(x)) + n^{-1/2}\rho_{n,\theta}^+(x) + n^{-1/2}\varepsilon_{n,\theta}^+(x),
  \end{split}
  \end{equation}
  where 
  \begin{align*}
  \rho_{n,\theta}^+(x) &=  \int_{\mathbb{R}}\kappa_h(u - \theta'x)\d B_{n,\theta}^+(u), \quad B_{n,\theta}^+(u) = \int_{\mathbb{R}} r \d_r B_{n,\theta}^*(u,r), \\  \varepsilon_{n,\theta}^+(x) &= \int_{\mathbb{R}} \kappa_h(u - \theta'x)\d Z_{n,\theta}^+(u), \quad Z_{n,\theta}^+(u) = \int_{\mathbb{R}} r\d_r Z_{n,\theta}^*(u,r).
  \end{align*}
  From assumption (K) and \eqref{eq:csorgo2}, the third line of \eqref{eq:BB_approx2} is established since
  \begin{equation}
      \label{eq:varepBound}
      \sup_{\theta,x} |\varepsilon_{n,\theta}^+(x)| = O\left(\frac{[\log n]^{3/2}}{hn^{1/[2(p+2)]}}\right) \quad \textrm{a.s.}
  \end{equation}

  Next, consider the continuity modulus $\alpha_n^+$ of the Gaussian process 
  $$
  \tilde{B}_n^+(z) = \int_\mathbb{R} r \d_r B_n^+(T^+(z,r))
  $$ 
  on $\mathbb{R}^p$.  Define $\gamma_j(z) = \int_\mathbb{R} r^j \d_r F_{X,R}(z,r)$, $j = 1,2$, and, for two points $z,z',$ let $\underline{z}$ denote their element-wise minimum.  Then
  $$
  \mathrm{Cov}(\tilde{B}_n^+(z), \tilde{B}_n^+(z')) = \gamma_2(\underline{z}) - \gamma_1(z)\gamma_1(z').
  $$
  Hence,
  \begin{equation}
  \label{eq:meanSqCont}
  \begin{split}
  E\left\{(\tilde{B}_n^+(z) - \tilde{B}_n^+(z'))^2\right\} &= \gamma_2(z) + \gamma_2(z') - 2\gamma_2(\underline{z}) - [\gamma_1(z) - \gamma_1(z')]^2 \\
  &\leq \gamma_2(z) + \gamma_2(z') - 2\gamma_2(\underline{z}) \\
  &= \int_{\mathbb{R}} r^2 \d_r[F_{X,R}(z,r) - F_{X,R}(\underline{z},r)] \\
  &\hspace{0.5cm} + \int_{\mathbb{R}} r^2 \d_r[F_{X,R}(z',r) - F_{X,R}(\underline{z},r)]
  \end{split}
  \end{equation}
  Under assumption (F), since $\max\{\norm{z - \underline{z}},\norm{z' - \underline{z}}\} \leq \norm{z - z'},$ the above is bounded above by $M\norm{z - z'}$ for $M = \sup_{z} \int_{\mathbb{R}} r^2f_{X,R}(z,r)\d r.$
  Hence, we apply Lemma~2 of \cite{silverman1976gaussian} to conclude that there exists a random variable $A^+ > 0$ with finite expectation such that, with $q^+(\epsilon) = \epsilon^{1/2},$
  \begin{equation}
      \label{eq:alphaBound2}
      \alpha_n^+(\epsilon) \leq 16(2Mp^{1/2}\log A^+)^{1/2}q^+(\epsilon) + 32p^{3/4}\sqrt{M}\int_0^\epsilon \{\log(1/s)\}^{1/2} \d q^+(s).
  \end{equation}
  Letting $\alpha_{n,\theta}^+$ denote the continuity modulus of $B_{n,\theta}^+(u)$, the above arguments demonstrate that
  $$
  \alpha_{n,\theta}^+(\epsilon) \leq \alpha_n^+(L_T^+\epsilon),
  $$
  where $L_T^+$ is the Lipschitz constant of $T^+.$  
  
  Finally, using integration by parts, the change of variable $s = (u - \theta'x)/h,$ and assumption (K), for large enough $n$ we will have
  \begin{equation}
      \label{eq:rhobound2}
      \begin{split}
      |\rho_{n,\theta}^+(x)| &= \left|\int_{\mathbb{R}} \kappa_h(u - \theta'x)\d B_{n,\theta}^+(u)\right| \\
      &=h\inv \int_\mathbb{R}|B_{n,\theta}^+(\theta'x + hs) - B_{n,\theta}^+(\theta'x)||\d \kappa(s)| \\
      &\leq h\inv\int_{\mathbb{R}} \alpha_n^+(L_T^+hs) |\d \kappa(s)|.
      \end{split}
  \end{equation}
  Applying the integral arguments of Proposition~3 of \cite{mack1982weak} under (K), we can conclude the second line of \eqref{eq:BB_approx2}.
\end{proof}

\begin{lemma}
\label{lma:unifKernReg}
Under assumptions (K) and (F), for any fixed $\omega \in \Omega,$
$$
\sup_{x,\theta}\left|\frac{1}{n}\sum_{i = 1}^n r_h(X_i,x,\theta)d^2(Y_i,\omega) - \tilde{\Lambda}_\oplus(\omega,\theta'x,\theta)\right| = O_P\left(b_n'\right),
$$
where $$b_n' = \max\left[\left\{\frac{-\log h}{nh}\right\}^{1/2}, \frac{[\log n]^{3/2}}{hn^{(p + 3)/[2(p + 2)]}}\right].$$
\end{lemma}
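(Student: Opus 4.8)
The plan is to exploit the linear structure of the population weight $r_h$ in \eqref{eq:mujetc}, which cleanly separates deterministic factors from response-weighted kernel averages. Expanding the definition gives
\[
\frac{1}{n}\son r_h(X_i, x, \theta)d^2(Y_i,\omega) = \sigma_\theta^{-2}(x)\left[\mu_{2,\theta}(x)\hat{A}_\theta(x) - \mu_{1,\theta}(x)\hat{B}_\theta(x)\right],
\]
where $\hat{A}_\theta(x) = n\inv\son K_h(\theta'(X_i - x))d^2(Y_i,\omega)$ and $\hat{B}_\theta(x) = n\inv\son K_h(\theta'(X_i - x))\theta'(X_i - x)d^2(Y_i,\omega).$ Because $\mu_{1,\theta}(x),$ $\mu_{2,\theta}(x),$ and $\sigma_\theta^2(x)$ are nonrandom, taking expectations shows that $\tilde\Lambda_\oplus(\omega,\theta'x,\theta)$ has the identical form with $\hat{A}_\theta$ and $\hat{B}_\theta$ replaced by their means. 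Thus the quantity to be bounded equals $\sigma_\theta^{-2}(x)\{\mu_{2,\theta}(x)[\hat{A}_\theta(x) - E\hat{A}_\theta(x)] - \mu_{1,\theta}(x)[\hat{B}_\theta(x) - E\hat{B}_\theta(x)]\},$ and it remains to control the two centered averages uniformly in $x$ and $\theta.$

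I would next apply Lemma~\ref{lma:BB_approx2} to the two kernels $\kappa(u) = K(u)$ and $\kappa(u) = K(u)u,$ both of which satisfy the hypotheses of Lemma~\ref{lma:BB_approx} under assumption (K), the relevant integrability being precisely the finiteness of $\int_\R |w\log|w||^{1/2}\d K^{(j)}(w)$ for $j = 0,1.$ For $\kappa = K$ one has $\hat{A}_\theta(x) = \hat\nu_\theta^+(x)$, so the two displayed rates in \eqref{eq:BB_approx2} combine with the $n^{-1/2}$ prefactor to give $\sup_{x,\theta}|\hat{A}_\theta(x) - E\hat{A}_\theta(x)| = O_P(b_n');$ here the $\rho$ term yields the first entry of $b_n'$ while the $\varepsilon$ term yields $(\log n)^{3/2}h\inv n^{-1/2 - 1/[2(p+2)]} = (\log n)^{3/2}h\inv n^{-(p+3)/[2(p+2)]},$ the second entry. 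For $\kappa(u) = K(u)u$ the scaling identity $\kappa_h(v) = h\inv K_h(v)v$ gives $\hat\nu_\theta^+(x) = h\inv\hat{B}_\theta(x),$ whence $\sup_{x,\theta}|\hat{B}_\theta(x) - E\hat{B}_\theta(x)| = O_P(hb_n').$

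Finally, I would combine these stochastic rates with the uniform orders of the deterministic coefficients established in the proof of Lemma~\ref{lma:unifKDE}: under (K) and (F) one has $|\mu_{j,\theta}(x)| = O(h^j)$ and $\sigma_\theta^{-2}(x) = O(h^{-2})$ uniformly in $x$ and $\theta,$ the lower bound on $f_{\theta'X}$ in (F) preventing $\sigma_\theta^2(x)$ from degenerating faster than $h^2.$ Then $\sigma_\theta^{-2}(x)\mu_{2,\theta}(x) = O(1)$ multiplies the $O_P(b_n')$ fluctuation of $\hat{A}_\theta,$ and $\sigma_\theta^{-2}(x)\mu_{1,\theta}(x) = O(h\inv)$ multiplies the $O_P(hb_n')$ fluctuation of $\hat{B}_\theta,$ so that both products are $O_P(b_n')$ uniformly and the result follows. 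The only point requiring real care is the bookkeeping of the dimension-dependent rate: because the response is carried through the averages via $R = d^2(Y,\omega),$ the governing Brownian bridge lives in $\R^{p+1}$ through the pair $(X,R),$ forcing the use of Lemma~\ref{lma:BB_approx2} with exponent $1/[2(p+2)]$ rather than Lemma~\ref{lma:BB_approx}. This is exactly what shifts the second term of $b_n'$ to $n^{-(p+3)/[2(p+2)]},$ as opposed to the $n^{-(p+2)/[2(p+1)]}$ appearing in $b_n$ of Lemma~\ref{lma:unifKDE}.
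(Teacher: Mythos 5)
Your proof is correct and takes essentially the same route as the paper's: the same decomposition of $\frac{1}{n}\sum_{i=1}^n r_h(X_i,x,\theta)d^2(Y_i,\omega) - \tilde{\Lambda}_\oplus(\omega,\theta'x,\theta)$ into centered, response-weighted kernel averages for the kernels $K(u)$ and $K(u)u$, each controlled uniformly in $(x,\theta)$ by Lemma~\ref{lma:BB_approx2} and multiplied by the deterministic coefficients $\mu_{2,\theta}(x)\sigma_\theta^{-2}(x) = O(1)$ and $\mu_{1,\theta}(x)\sigma_\theta^{-2}(x) = O(h^{-1})$. Your rate bookkeeping (the $n^{-1/2}$ prefactor turning the almost-sure $\varepsilon$ bound into $n^{-(p+3)/[2(p+2)]}$, and the factor $h$ from the scaling $\kappa_h(v) = h^{-1}K_h(v)v$ offsetting the $O(h^{-1})$ coefficient) matches the paper exactly, with the added benefit that you make explicit the coefficient bounds the paper only asserts.
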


\begin{proof}
  Define $\hat{\nu}_{0,\theta}^+(x)$ and $\hat{\nu}_{1,\theta}^+(x)$ as in the statement of Lemma~\ref{lma:BB_approx2} for the kernel choices $K(u)$ and $K(u)u,$ respectively.  Then write
  \begin{equation}
  \label{eq:rhDec}
  \begin{split}
  &\frac{1}{n}\sum_{i = 1}^n r_h(X_i,x,\theta)d^2(Y_i,\omega) - \tilde{\Lambda}_\oplus(\omega,\theta'x,\theta) \\
  &\hspace{1cm}= \frac{\mu_{2,\theta}(x)}{\sigma_\theta^2(x)}\left\{\hat{\nu}_{0,\theta}(x) - E[\hat{\nu}_{0,\theta}(x)]\right\} + \frac{h\mu_{1,\theta}(x)}{\sigma_\theta^2(x)}\left\{\hat{\nu}_{1,\theta}(x) - E[\hat{\nu}_{1,\theta}(x)]\right\}.
  \end{split}
  \end{equation}
  Under assumption (K) and (F), both $\mu_{2,\theta}(x)\sigma_{\theta}^{-2}(x)$ and $h\mu_{1,\theta}(x)\sigma_\theta^{-2}(x)$ are uniformly bounded in $x$ and $\theta$ as $h \rightarrow 0.$  Applying Lemma~\ref{lma:BB_approx2} to $\hat{\nu}_{j,\theta}(x),$ $j = 0,1,$ completes the proof.
\end{proof}

\subsection{Proofs of main results}
\label{ss:main_proofs}

\begin{proof}[Proof of Theorem~\ref{thm:unif_cons}] 
Begin by expanding
  \begin{equation}
    \label{eq:sup_crit}
    \begin{split}
    \sup_{\omega,x,\theta}& |\hLp(\omega,\theta'x,\theta) - \Lp(\omega,\theta'x,\theta)| \\
    &\leq \sup_{\omega,x,\theta}\left|\tilde{\Lambda}_\oplus(\omega,\theta'x,\theta) - \Lambda_\oplus(\omega,\theta'x,\theta)\right| \\
    &\hspace{0.5cm} + \sup_{\omega,x,\theta} \left|\frac{1}{n}\sum_{i = 1}^n \{\hat{r}(X_i, x,\theta) - r_h(X_i, x, \theta)\}d^2(Y_i, \omega)\right| \\
    &\hspace{0.5cm} + \sup_{\omega,x,\theta} \left|\frac{1}{n}\sum_{i = 1}^n r_h(X_i,x,\theta)d^2(Y_i,\omega) - \tilde{\Lambda}_\oplus(\omega,\theta'x,\theta)\right|.
    \end{split}
  \end{equation}
The first two terms on the right-hand side are $o_P(1)$ by Lemmas~\ref{lma:bias} and \ref{lma:unifKDE}, respectively.  For the third term, we have
$$
\psi_n(\omega) = \sup_{x,\theta} \left|\frac{1}{n}\sum_{i = 1}^n r_h(X_i,x,\theta)d^2(Y_i,\omega) - \tilde{\Lambda}_\oplus(\omega,\theta'x,\theta)\right| = o_P(1)
$$
for any fixed $\omega$ by Lemma~\ref{lma:unifKernReg} and the conditions on the bandwidth.  It is straightforward to show, using Lemma~\ref{lma:unifKDE} and the boundedness of $\Omega$ that, for any $\omega_1,\omega_2 \in \Omega,$
\[
\begin{split}
|\psi_n(\omega_1) - \psi_n(\omega_2)| &\leq O(d(\omega_1,\omega_2))\sup_{\theta,x}\left\{\frac{1}{n}\son |r_h(X_i,x,\theta)| + E[r_h(X,x,\theta)]\right\} \\
& = d(\omega_1,\omega_2)\left[O_P(b_n) + O(1)\right] \\
& = O_P(d(\omega_1,\omega_2)),
\end{split}
\]
where $b_n$ is the rate given in the statement of Lemma~\ref{lma:unifKDE}.  Hence, applying Theorem 1.5.4 of \cite{well:96}, we see that $\sup_{\omega \in \Omega} |\psi_n(\omega)| = o_p(1),$ so that
the first expression in \eqref{eq:sup_crit} converges to zero in probability.

Having established this, for any $\epsilon > 0,$ let $\eta > 0$ be the constant in (M).  Let $x,\theta$ be fixed, and $\omega$ be a point satisfying \mbox{$d(\omega, \gp(\theta'x, \theta)) > \epsilon.$}  If 
$$
\sup_{\theta, x, \omega}|\hLp(\omega,\theta'x,\theta) - \Lp(\omega,\theta'x,\theta)| < \eta/2,
$$ 
then
\[
\begin{split}
\hLp(\gp(\theta'x,\theta),\theta'x,\theta) &< \Lp(\gp(\theta'x,\theta),\theta'x,\theta) + \frac{\eta}{2} \\
&<\Lp(\omega,\theta'x,\theta) - \eta + \frac{\eta}{2} \\
&<\hLp(\omega,\theta'x,\theta) + \frac{\eta}{2} - \frac{\eta}{2} \\
&=\hLp(\omega,\theta'x,\theta).
\end{split}
\]
Therefore, such $\omega$ cannot be a minimizer of $\hLp(\cdot,\theta'x,\theta),$ whence 
$$
d(\hgp(\theta'x,\theta),\gp(\theta'x,\theta))\leq \epsilon.
$$  Since this argument holds simultaneously for all $\theta$ and all $x,$ the result holds.

\end{proof}

\begin{proof}[Proof of Corollary~\ref{cor:theta_cons}]
  Let $V(\theta_0)$ be any neighborhood of $\theta_0$ in $\Theta_p.$  then
  \[
  \begin{split}
    P(\hat{\theta} \in V(\theta_0)) \geq P(W_n(\hat{\theta}) &\geq W_n(\theta)) - P(\inf_{\theta \notin V(\theta_0)} W_n(\theta) \leq W_n(\theta_0)) \\
    &= 1 - P(\inf_{\theta \notin V(\theta_0)} W_n(\theta) \leq W_n(\theta_0))
  \end{split}
  \]
  We will show this last probability tends to zero.  Writing $$\tilde{W}_n(\theta) = n\inv \son d^2(Y_i, \gp(\theta'X_i,\theta)),$$
  it follows that
  \[
  \begin{split}
      &P\left(\inf_{\theta \notin V(\theta_0)} W_n(\theta) \leq W_n(\theta_0)\right) \\ &\hspace{0.5cm}\leq P\left(\inf_{\theta \notin V(\theta_0)} [ W_n(\theta) - \tilde{W}_n(\theta)] + \inf_{\theta \notin V(\theta_0)} [\tilde{W}_n(\theta) - W(\theta)] \right. \\
      &\hspace{2cm} \left.+ \inf_{\theta \notin V(\theta_0)} W(\theta) \leq  W_n(\theta_0)\right) \\
      &\hspace{0.5cm}\leq P\left(\sup_{\theta} |W_n(\theta) - \tilde{W}_n(\theta)| + \sup_{\theta}|\tilde{W}_n(\theta) - W(\theta)| \right. \\
      &\hspace{2cm} \left. + |W_n(\theta_0) - W(\theta_0)| > \inf_{\theta \notin V(\theta_0)} W(\theta) - W(\theta_0)\right).
  \end{split}
  \]
  As $\inf_{\theta \notin V(\theta_0)} W(\theta) - W(\theta_0) > 0$ due to the identifiability condition, we show that each of the terms on the left hand side of the last probability statement converges to zero in probability.  Using the uniform law of large numbers, which is applicable here since (M) implies continuity of $\gp$ in both arguments, $\sup_{\theta} |\tilde{W}_n(\theta) - W(\theta)| = o_p(1)$.  By boundedness of $\Omega$ and the triangle inequality,
  $$
  \sup_\theta |W_n(\theta) - \tilde{W}_n(\theta)| \leq \frac{2\mathrm{diam}(\Omega)}{n}\sup_\theta \son d(\gp(\theta'X_i,\theta), \hgp(\theta'X_i,\theta))
  $$
  is $o_P(1)$ by Theorem~\ref{thm:unif_cons}.  The above results clearly imply that $|W_n(\theta_0) - W(\theta_0)| = o_p(1)$.
\end{proof}
\begin{proof}[Proof of Corollary~\ref{cor:reg_cons}]
By direct application of the triangle inequality, whenever $\norm{\hat{\theta} - \theta_0} < \delta,$
\[
  \begin{split}
  \sup_{x} d(\hmp(x), \mp(x)) &= \sup_x d(\tgp(\hat{\theta}'x, \hat{\theta}), g_\oplus(\theta_0'x, \theta_0) \\
  & \leq \sup_{\norm{\theta - \theta_0} < \delta} \sup_x d(\tgp(\theta'x, \theta), g_\oplus(\theta'x, \theta)) \\
  &\hspace{1.5cm} +\,\, \sup_x d(g_\oplus(\hat{\theta}'x, \hat{\theta}), g_\oplus(\theta_0'x, \theta_0)).
  \end{split}
  \]
  By assumption, the first term converges to zero in probability.  Since $\hat{\theta}$ has been proven to be consistent and assumption (M) implies continuity of $g_\oplus$, the second term also converges weakly to zero.
\end{proof}

\bibliographystyle{plain}

\end{document}